\newtheorem{theorem}{Theorem}
\newtheorem{corollary}{Corollary}
\newtheorem{lemma}{Lemma}
\begin{document}

\title{A Stackelberg Game Approach for Two-Level Distributed Energy Management in Smart Grids}

\author{   Juntao~Chen,~\IEEEmembership{Student Member,~IEEE,}
       and Quanyan~Zhu,~\IEEEmembership{Member,~IEEE}

\thanks{This paper has been accepted to be published in  \textit{IEEE Transactions on Smart Grids}.}

\thanks{The authors are with the Department of Electrical and Computer Engineering, Tandon School of Engineering, New York University, Brooklyn, NY 11201 USA. E-mail: \{jc6412, qz494\}@nyu.edu.}

\thanks{This research is partially supported by a DHS grant through Critical Infrastructure Resilience Institute (CIRI) and grants EFRI-1441140, SES-1541164, and ECCS-1550000 from National Science Foundation (NSF).}
}

\maketitle

\begin{abstract}
The pursuit of sustainability motivates microgrids that depend on distributed resources to produce more renewable energies. An efficient operation and planning relies on a holistic framework that takes into account the interdependent decision-making of the generators of the existing power grids and the distributed resources of the microgrid in the integrated system. To this end, we use a Stackelberg game-theoretic framework to study the interactions between generators (leaders) and microgrids (followers). Entities on both sides make strategic decisions on the amount of power generation to maximize their payoffs. Our framework not only takes into account the economic factors but also incorporates the stability and efficiency of the smart grid, such as the power flow constraints and voltage angle regulations. We present three update schemes for microgrids. In addition, we develop three other algorithms for generators, and among which a fully distributed algorithm enabled by phasor measurement units is proposed. The distributed algorithm merely requires the information of voltage angles at local buses for updates, and its convergence to the unique equilibrium is shown. We further develop the implementation architectures of the update schemes in the smart grid. Finally, case studies are used to corroborate the effectiveness of the proposed algorithms.
\end{abstract}

\begin{IEEEkeywords}
Renewable energy,  Distributed control, Stackelberg game, Power flow, Microgrids.
\end{IEEEkeywords}

\section{Introduction}
In the future smart grid, a large number of green energy systems that depend on renewable distributed resources, such as solar, wind, biomass and geothermal, will be built and integrated with the current main power grids \cite{AEO}.
These distributed resources can be built into smart microgrids, which are green energy systems that can operate independently for self-efficiency. For example, in remote and wild areas, wind farm consisting of wind turbines is a possible method to satisfy its local power demands \cite{chen2015resilient,dietrich2012demand,chen2015optimal}. In addition, smart microgrids can be connected with the main power system which is beneficial for grid dependability and resiliency \cite{farzinenhancing}. In this case, the microgrids can enter the power market to sell renewable energies or buy electricity from the external grid. The decision of each microgrid is based on its objective and thus is strategic. Comparing with microgrids, generators in the power system are generally equipped with a larger capacity, and they play dominant roles in the grid and determine the electricity price in the power market. By considering the power balance and stability issues of the system, generators' strategy on the power generation will influence the microgrids' decision, which introduces interdependencies between these two separate entities in the energy system.

In this paper, we study the interactions and decision interdependencies between multiple generators and microgrids in the smart grid. Specifically, the generators aim to maximize their revenues by determining the power generations supplied to the loads in the system. Based on the generators' decision, each microgrid makes a strategy on the amount of renewable energy injection to the grid and serving for its local demand. The inclusion of microgrid entities in the power system leads to a competing mechanism among themselves. In addition, the strategies of microgrids will in turn impact the decision of the generators. A general two-layer framework of smart grid considered is depicted in Fig. \ref{layer1}, in which the upper layer includes generators and the bottom layer contains microgrids. Furthermore, the communication network between two layers is used to exchange information, such as the amount of power generation and electricity price. 
\begin{figure}[!t]
\centering
\includegraphics[width=0.8\columnwidth]{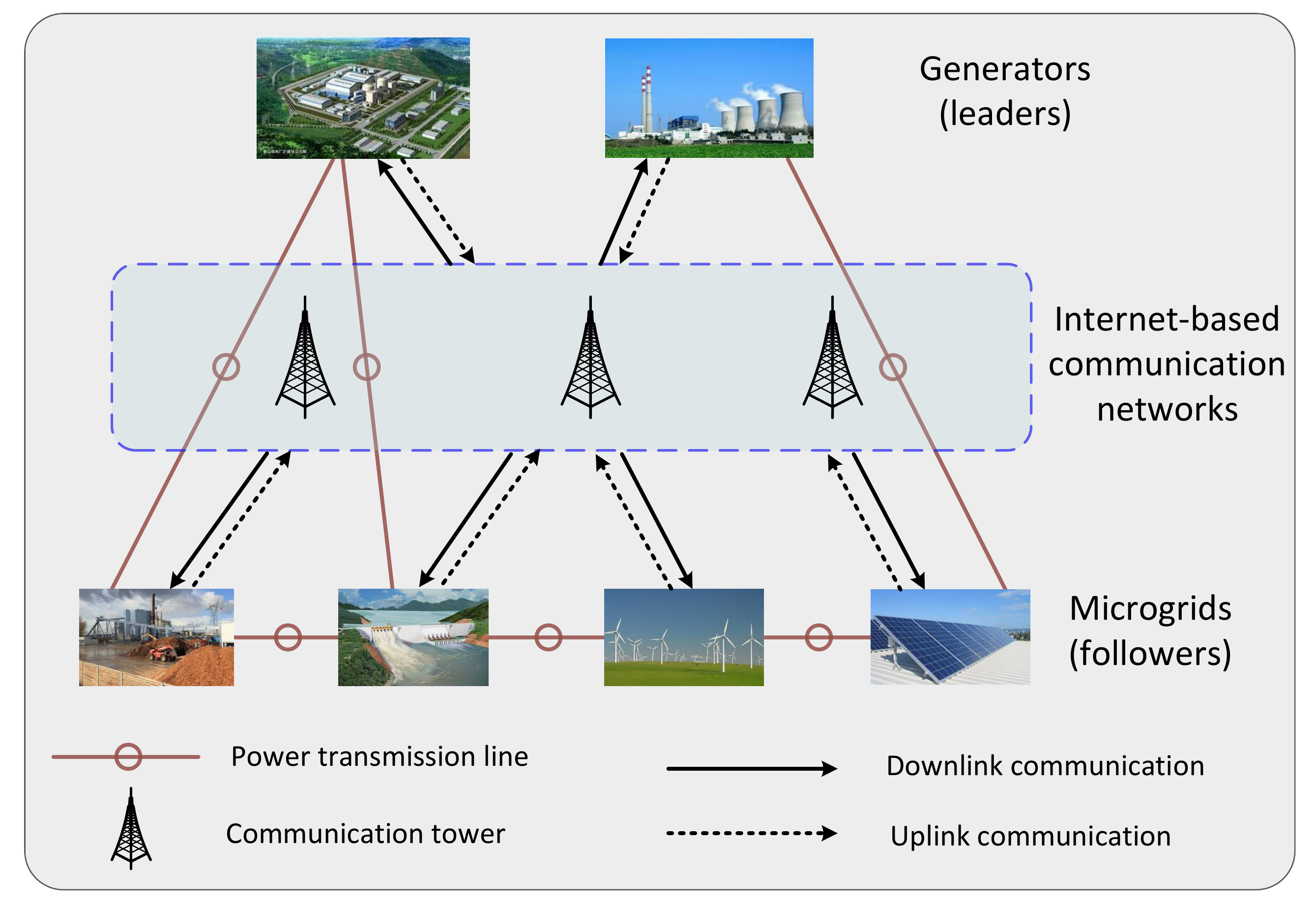}
\caption{Illustration of a general two-layer framework of smart grid which includes a generator network and a microgrid network. Information exchange between two layers is enabled by the communication network.}\label{layer1}
\end{figure}

In our framework, the payoffs of generators and microgrids are related to both physical and economic factors which include the system power flows, renewable generation capacity, generation cost and electricity market price \cite{ li2012automated, zhu2012game}. Since generators play dominant roles in the smart grid and power market, they can be seen as leaders during the decision-making processes, while microgrids are regarded as followers. The complex cross-layer interactions in the smart grid motivate a Stackelberg game framework to understand the interdependent decision-making \cite{basar1995dynamic}. Note that the generators and microgrids can be viewed as players in this game. Specifically, the microgrids at the bottom layer play a non-cooperative Nash game, and they aim to find the best operation and planning decisions in response to the generators' strategy. For the integrated smart grid, the outcome of the strategic decision-making of the multi-layer and multi-agent interactions is characterized by Stackelberg equilibrium which yields the generators' and microgrids' strategies jointly.

To design an advanced smart grid energy management system, we first propose two communication-based update algorithms for generators and microgrids to enable their decision-makings. However, the communication distance between players in the grid can be long which makes the long-range communication either infeasible or not economical. To this end, we further develop a PMU-enabled distributed scheme which merely depends on the phasor measurement unit (PMU) devices  \cite{de2010synchronized} to measure the voltage angles at buses. With this scheme, the generators and microgrids do not need to exchange their private information, such as power generation and voltage angle regulation parameters, and make optimal decisions independently, preserving high confidentiality for the players.

The contributions of this paper are summarized as follows.
\begin{enumerate}
\item We establish a Stackelberg game-theoretic framework to capture the strategic generation plannings of generators and microgrids in the smart grid and characterize the solution via Stackelberg equilibrium (SE).
\item The decision-making of the players considers not only the economic factors but also the physical constraints of the power system in the framework.
\item We develop distributed algorithms to find the equilibrium strategy and derive sufficient conditions to ensure its convergence to a unique SE.
\item To implement the proposed update schemes, we propose a system architecture that builds on PMU and wireless communication infrastructure of the smart grid.
\end{enumerate}

\subsection{Related Work}
Game-theoretic methods have been widely used to capture the strategic behaviors of agents in infrastructure systems, e.g., power systems and communication systems \cite{zhu2012game,mohsenian2010autonomous,
nekouei2015game,zhu2012differential, chen2015resilient3,kurian2017electric,
chen2016optimal,zhu2015game,chen2016interdependent}. In hierarchical power systems, a Stackelberg game approach has been used to model the demand response between users and utility companies \cite{maharjan2013dependable,
yu2016real,zhu2013multi}. However, they have mainly focused on the economic aspect of power generation planning and dispatch, while physical constraints, such as power flows and voltage angle regulations, are not included in the framework. Our work extends their multilevel model to a more practical one in the Stackelberg game setting and design distributed algorithms to enable the decision-making of players. The upper layer's problem in our framework can be categorized into a mathematical program with equilibrium constraints (MPEC) \cite{luo1996mathematical,hobbs2000strategic}, and we obtain analytical results first to solve it in a distributed fashion.

In \cite{chen2017game}, the authors have establised a two-layer framework and addressed the problem of microgrids generation planning in a Nash game by assuming that the generators' decision are given. In the current work, we view the generators at the upper layer as leaders and establish a Stackelberg game framework to deal with a more challenging two-level decisoin-making problem. The presented three update schemes for microgrids in Section \ref{microgrid_algorithms} mainly follow the work \cite{chen2017game}. In this work, we focus more on the generator/leader side analysis in Section \ref{generator_analysis} and the update algorithms design, especially the novel fully distributed one for generators presented in Section  \ref{generator_algorithm}. In addition, we design implementation architectures for the players at both layers in Section \ref{architecture} which differs from \cite{chen2017game}.

\subsection{Organization of the Paper}
The remainder of the paper is organized as follows. Section \ref{pre} introduces the preliminaries of power flows. In Section \ref{game_framework}, we formulate a Stackelberg game-theoretic framework to capture the strategic behaviors of microgrids and generators, and define its solution concept. Game analysis is presented in Section \ref{gameanalysis}, and the update schemes to find the equilibrium solution are developed in Section \ref{algorithms}. Implementation frameworks of the algorithms in smart grid are designed in Section \ref{architecture}. Case studies are given in Section \ref{cases}, and Section \ref{conclusion} concludes the paper.

\subsection{Notations and Conventions}
Some notations and conventions are summarized in Table \ref{table1}. Superscript $*$ indicates the equilibrium. In addition, subscripts $l$ and $f$ represent the leader and the follower, respectively. Microgrid and follower refer to the same entity and used interchangeably, and similar for the terms of ``generator'' and ``leader''.

\begin{table}[t]
\centering
\renewcommand\arraystretch{1.4}
\caption{Nomenclature\label{table1}}
\begin{tabular}{ll} \hline
$P_i^g$ & power generation at bus $i$\\ 
$P_{i,\max}^g$ & maximum power generation at bus $i$\\ 
$P_i^l$ & power load at bus $i$\\ 
$P_i$ & active power injection at bus $i$\\ 
$\theta_i$ & voltage angle at bus $i$\\ 
$N_d$ & number of microgrids\\
$N_g$ & number of generators\\
$\psi_i$ & unit generation cost of microgrid $i$\\
$\zeta$ & unit market power price \\
$\eta_i$ & weighting constant of voltage angle regulations at microgrid $i$ \\
$\alpha_j$ & weighting constant of voltage angle regulations at generator $j$ \\
$a_j,b_j,c_j$ & parameters in generator $j$'s generation cost function\\
$J_{f,i}$ & objective of microgrid $i$\\
$J_{l}$ & objective of generators\\
$\mathbf{P}$ & $\mathbf{P}:=[P_1,P_2,...,P_N]^T\in\mathbb{R}^N$ \\
$\mathbf{P}_d$ & $\mathbf{P}_d:=[P_1,P_2,...,P_{N_d}]^T\in\mathbb{R}^{N_d}$\\
$\mathbf{P}_g$ & $\mathbf{P}_g:=[P_{N_d+1},P_{N_d+2},...,P_{N}]^T\in\mathbb{R}^{N_g}$\\ 
$\mathbf{P}_d^g$ & $\mathbf{P}_d^g:=[P_1^g,P_2^g,...,P_{N_d}^g]^T\in\mathbb{R}^{N_d}$ \\
$\mathbf{P}_g^g$ & $\mathbf{P}_g^g:=[P_{N_d+1}^g,P_{N_d+2}^g,...,P_{N}^g]^T\in\mathbb{R}^{N_g}$\\
$\mathbf{\theta}$ & $\mathbf{\theta}:=[\theta_1,\theta_2,...,\theta_N]^T\in\mathbb{R}^N$\\ 
$\boldsymbol{\theta}_d $ & $\boldsymbol{\theta}_d := [\theta_1,\theta_2,...,\theta_{N_d}]^T\in\mathbb{R}^{N_d}$ \\
$\boldsymbol{\theta}_g$ & $\boldsymbol{\theta}_g := [\theta_{N_d+1},\theta_{N_d+2},...,\theta_{N}]^T\in\mathbb{R}^{N_g}$ \\
$\mathcal{N}_d$ & set of buses connected with microgrids\\
$\mathcal{N}_g$ & set of buses connected with generators\\
$\mathcal{P}$ & feasible set of active power injection of all buses\\
$\mathcal{P}^{\mathcal{G}}_{f,i}$ & set of microgrid $i$'s generation \\
$\Theta_{i}$ & set of microgrid $i$'s voltage angle\\
$\mathcal{P}^{\mathcal{G}}_{f,-i}$ &  Cartesian product of $\mathcal{P}^{\mathcal{G}}_{f,j},\ \forall j\in\mathcal{N}_d\setminus\{i\}$ \\
$\Theta_{-i}$ &  Cartesian product of $\Theta_{j},\ \forall j\in \mathcal{N}_d\setminus\{i\}$  \\
$\mathcal{P}_{f,F}^{\mathcal{G}}$ & feasible generation set of game $G_f$\\
$\Theta_{f,F}$ & feasible voltage angle profile of game $G_f$\\ 
 $\mathcal{F}_f$ & feasible action set of game $G_f$ \\
 $ \mathcal{P}^{\mathcal{G}}_{l,j}$ & set of generator $j$'s generation \\
 $\mathcal{P}_{l,F}^{\mathcal{G}}$ & feasible generation set of generators\\
 $\Theta_{l,F}$ & feasible voltage angle set of generators\\
 $\mathcal{F}_l$ & feasible action set of generators\\
 \hline
\end{tabular}
\end{table}

\section{Power Flow Preliminaries}\label{pre}
In a smart power grid consisting of $N+1$ buses, let $\mathcal{N}:=\{r,1,2,...,N\}$, and $r$ is the slack bus. In addition, denote $P_i$, $Q_i$, $V_i$ and $\theta_i$ as the amount of active power injection, reactive power injection, voltage magnitude and voltage angle at bus $i$, $i\in\mathcal{N}$, respectively. Then, the power flow equations of the system with reference to the slack bus $r$ are given by
\begin{equation}\label{PQ}
\begin{split}
P_i=\sum\limits_{j\in\mathcal{N}} V_iV_j[G_{ij}\cos(\theta_i-\theta_j)+B_{ij}\sin(\theta_i-\theta_j)],\\
Q_i=\sum\limits_{j\in\mathcal{N}} V_iV_j[G_{ij}\sin(\theta_i-\theta_j)-B_{ij}\cos(\theta_i-\theta_j)],
\end{split}
\end{equation}
for $i,j=1,2,...,N,$ where $G_{ij}$ and $B_{ij}$ represent the real part and imaginary part of element $(i,j)$ in the admittance matrix $\mathbf{Y}\in\mathbb{C}^{N\times N}$ of the power system. $V_r$ and $\theta_r$ are both known, and $\theta_r=0$ by default.

Furthermore, let $P_i^g$ and $P_i^l$ be the power generation and power load at bus $i$, respectively. Then, the active power injection at bus $i$ has the relation
\begin{equation}\label{injection}
P_i=P_i^g-P_i^l,\ \forall i\in\mathcal{N}.
\end{equation}
Moreover, by considering the balance of the power grid, we have
$\sum_{i\in\mathcal{N}} P_i^g=\sum_{i\in\mathcal{N}} P_i^l.$

For notational clarity in the system representation and analysis, the set $\mathcal{N}$ excludes the slack bus notation $r$ in the following. In power system analysis, DC approximation is used for fast calculation of power flows \cite{glover2011power}.  We assume that the resistance of transmission lines is much smaller than its reactance; the voltage angles $\theta_i,\ \forall i\in \mathcal{N}$, are small, and the magnitudes of voltages $V_i,\ \forall i\in\mathcal{N}$, are equal to 1 p.u. Then, $Q_i=0,\ \forall i\in\mathcal{N}$, $G_{ij}\ll B_{ij}$, $\sin(\theta_i-\theta_j)\approx \theta_i-\theta_j$ and $\cos(\theta_i-\theta_j)\approx 1$. Therefore, power flow equations \eqref{PQ} can be represented by a set of linear equations
$
P_i=\sum_{j\neq i}B_{ij}(\theta_i-\theta_j),\ \forall i,j\in\mathcal{N},
$
which can be rewritten in following matrix form
\begin{equation}\label{power_dc}
\mathrm{\mathbf{P}}=-\mathrm{\mathbf{B}}\boldsymbol{\theta},
\end{equation}
where $\mathbf{P}=[P_1,P_2,...,P_N]^T\in\mathbb{R}^N$ and $\mathbf{\theta}=[\theta_1,\theta_2,...,\theta_N]^T\in\mathbb{R}^N$.

Matrix $\mathbf{B}$ contains the imaginary components of $\mathbf{Y}$ except the slack bus's row and column. Note that $-\mathbf{B}$ is a symmetric reduced Laplacian matrix, and hence $\mathbf{B}$ is invertible through the Kirchhoff's matrix-tree theorem \cite{chaiken1982combinatorial}.
Since $\mathbf{B}$ is nonsingular, \eqref{power_dc} can be rewritten as 
\begin{equation}\label{theta}
\boldsymbol\theta=\mathbf{SP},
\end{equation} 
where $\mathbf{S}:=[s_{ij}]_{i,j\in \mathcal{N}}=-\mathbf{B}^{-1}$.

\begin{lemma}[Lemma 1, \cite{chen2017game}]\label{lemma_Smatrix}
Matrix $\mathbf{S}$ is symmetric  and ${s_{ij}\geq 0},\ \forall\ i,j\in\mathcal{N}$, and especially $s_{ii}>0,\ \forall\ i\in\mathcal{N}$.
 \end{lemma}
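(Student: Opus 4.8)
The statement has three parts: symmetry of $\mathbf{S}$, nonnegativity of every entry, and strict positivity of the diagonal. Symmetry is immediate --- $\mathbf{B}$ is symmetric (it is assembled from the imaginary part of the symmetric admittance matrix $\mathbf{Y}$), so $\mathbf{B}^{-1}$ is symmetric and hence so is $\mathbf{S}=-\mathbf{B}^{-1}$. For the remaining two claims I would exploit the structure of $M:=-\mathbf{B}$ as a grounded (reduced) Laplacian, recording two properties. (i) $M$ is a $Z$-matrix: its off-diagonal entries are $-B_{ij}\le 0$, since the DC susceptance $B_{ij}$ of a connected bus pair is positive, while its diagonal entries $m_{ii}=\sum_{j}B_{ij}$ (sum over all buses adjacent to $i$, possibly including the slack bus $r$) are positive because every bus has at least one neighbor in the connected network. (ii) $M$ is symmetric positive definite: it is the principal submatrix obtained from the symmetric positive-semidefinite rank-$N$ weighted Laplacian of the $(N+1)$-bus network by deleting the slack-bus row and column, and for a connected graph such a grounded Laplacian is positive definite --- this is the same connectedness that underlies the invertibility of $\mathbf{B}$ via the matrix-tree theorem already invoked in the text.

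From (ii), $\mathbf{S}=M^{-1}$ is itself symmetric positive definite, so $s_{ii}=e_i^{T}\mathbf{S}e_i>0$ for every $i\in\mathcal{N}$, which settles the diagonal claim. For entrywise nonnegativity I would combine (i) and (ii): a symmetric positive-definite $Z$-matrix is a Stieltjes matrix, i.e.\ a nonsingular (symmetric) $M$-matrix, and the inverse of a nonsingular $M$-matrix is entrywise nonnegative. If a self-contained derivation is preferred to citing this standard fact, write $M=\mathbf{D}-\mathbf{A}$ with $\mathbf{D}$ the positive diagonal part and $\mathbf{A}\ge 0$ the off-diagonal part; then $\mathbf{D}^{-1}\mathbf{A}\ge 0$ has all row sums $\le 1$, with strict inequality for any bus adjacent to the slack bus, which forces $\rho(\mathbf{D}^{-1}\mathbf{A})<1$, and therefore $\mathbf{S}=(\mathbf{I}-\mathbf{D}^{-1}\mathbf{A})^{-1}\mathbf{D}^{-1}=\sum_{k\ge 0}(\mathbf{D}^{-1}\mathbf{A})^{k}\mathbf{D}^{-1}\ge 0$, each summand being a product of nonnegative matrices.

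The point needing the most care --- and what I expect to be the main obstacle --- is the bound $\rho(\mathbf{D}^{-1}\mathbf{A})<1$ when $\mathbf{D}^{-1}\mathbf{A}$ is reducible (for instance when the slack bus is a cut vertex and $M$ splits into blocks), since then the irreducible Perron--Frobenius argument does not apply verbatim and one must instead argue block by block, using that connectedness of the full network forces every strongly connected component to reach some row with row sum $<1$. This bookkeeping is bypassed entirely by leaning on property (ii): positive definiteness already certifies that $M$ is a nonsingular $M$-matrix, hence $\mathbf{S}\ge 0$. As a consistency check on the sign pattern, $s_{ij}$ is the voltage angle at bus $i$ induced by injecting one unit of active power at bus $j$ and withdrawing it at the slack bus; a discrete maximum principle for the relation $\mathbf{P}=-\mathbf{B}\boldsymbol{\theta}$ confines every such angle to $[0,\theta_j]$, which is exactly $s_{ij}\ge 0$ together with $s_{jj}>0$.
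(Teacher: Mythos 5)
The paper does not actually prove this lemma --- it is imported verbatim from \cite{chen2017game} (note the bracketed attribution in the statement), so there is no in-text argument to compare yours against. Judged on its own, your proof is correct and is the standard one for this kind of statement. Symmetry of $\mathbf{S}=-\mathbf{B}^{-1}$ follows from symmetry of $\mathbf{B}$; identifying $M=-\mathbf{B}$ as the grounded Laplacian of the network with the slack bus removed gives both the $Z$-matrix sign pattern and, for a connected network, positive definiteness, whence $s_{ii}=e_i^{T}\mathbf{S}e_i>0$; and the Stieltjes-matrix fact (a symmetric positive definite $Z$-matrix is a nonsingular $M$-matrix with entrywise nonnegative inverse) gives $s_{ij}\ge 0$. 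Your caution about the reducible case is well placed and in fact explains why the lemma asserts only $s_{ij}\ge 0$ rather than $s_{ij}>0$ off the diagonal: if the slack bus is a cut vertex, $s_{ij}=0$ whenever $i$ and $j$ lie in different components of the grounded graph. The only hypothesis you should state explicitly is connectivity of the $(N+1)$-bus network, which the paper itself uses only implicitly when it invokes the matrix-tree theorem to claim $\mathbf{B}$ is invertible; without it both the invertibility and the positive definiteness fail.
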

Lemma \ref{lemma_Smatrix} is useful for the analysis in Section \ref{gameanalysis}.

\section{Stackelberg Game-Theoretic Framework}\label{game_framework}
In this section, we present the microgrid side and generator side game-theoretic models in detail and then define the equilibrium solution concept.

\subsection{System Overview and Couplings}
In a smart energy system including a number of microgrids and generators, we denote $\mathcal{N}_d:=\{1,2,...,N_d\}$ as a set of $N_d$ buses that are connected with the microgrids, and they are able to generate renewable energies such as wind and solar power. In addition, denote $\mathcal{N}_g:=\{N_d+1,N_d+2,...,N\}=\mathcal{N}\setminus\mathcal{N}_d$ as a set of $N_g$ buses that are generators. The slack bus $r$ is further chosen to serve as a base of the power grid. For bus $i\in\mathcal{N}$ having local loads, its value $P_i^l$ is specified ahead of time. Both generators and microgrids in the system need to determine their amount of power generation, and they have a maximum generation limit
$0\leq P^g_i \leq P^g_{i,\max},\ \forall i\in\mathcal{N}.$ Note that microgrid's generation capacity $P^g_{i,\max},\ i\in\mathcal{N}_d$ is dynamically changing due to the intermittent nature of renewable generations. For convenience, the framework is studied for a given $P^g_{i,\max}$ in this paper, and it can be generalized to the dynamic case naturally.
Without loss of generality, we set the power loads at the generator buses to 0, i.e., $P_i^l = 0,\ \forall i\in\mathcal{N}_g$, then $\mathbf{P}_g=\mathbf{P}_g^{g}$.

As the framework shown in Fig. \ref{layer1}, generators at the upper layer are viewed as a leader, and they have the priority of determining the amount of power generations supplied to the loads at the lower layer. Microgrids are followers in this generation game, and their decisions need to take the strategies of the upper layer into account. In particular, the interactions between microgrids constitute a Nash game, and the couplings between two layers are captured by a Stackelberg game-theoretic framework. 

\subsection{Microgrid (Follower) Side Model}
Before formulating the Nash game between microgrids, we present underlying assumptions of the model. First, the topology of the whole power system is known to all players. This is justifiable since the parameters of power transmission lines are often known. Second, the constraints of each player are common information, and each player is aware of the physical constraints when making decisions. This indicates that players should take the power flow constraints \eqref{power_dc} into account. Third, during microgrids' updates, the power generations of generators are fixed. This assumption is reasonable since microgrids can regulate themselves more quickly than generators, and they can be seen as followers in this game \cite{zhu2012game,maharjan2013dependable,chai2014demand}. Furthermore, PMUs are installed at buses to measure the voltage angles in the smart grid. The game played among the microgrids are presented as follows.

Let  $G_f:=\{\mathcal{N}_d,{\{ \mathcal{P}^{\mathcal{G}}_{f,i},\Theta_i\} }_{i\in \mathcal{N}_d},{\{ J_{f,i} \}}_{i\in \mathcal{N}_d}, \mathcal{P} \}$ be a game with a set $\mathcal{N}_d$ of $N_d$ microgrids. Here, we use subscript $f$ to denote the \textit{follower} for convenience. ${\{ \mathcal{P}^{\mathcal{G}}_{f,i},\Theta_i\} }$ is the action set of microgrid $i$, where
$\mathcal{P}^{\mathcal{G}}_{f,i}:=\{ P_i^g\in\mathbb{R}_+\ |\ 0\leq P_i^g \leq P^g_{i,\max} \};$ Denote ${\{ \mathcal{P}^{\mathcal{G}}_{f,-i},\Theta_{-i}\} }$ as Cartesian product of all microgrids' action sets except $i$th one.
$\mathcal{P}$ is the feasible set of active power injections defined by constraint \eqref{power_dc}. Denote the feasible set of power generations of all buses in the grid as $\mathcal{P}^{\mathcal{G}}$ which can be obtained by using $\mathcal{P}$ through \eqref{injection}. Then, the feasible generation set of  game $G_f$ can be defined by $\mathcal{P}_{f,F}^{\mathcal{G}}:=\big( \otimes_{i\in \mathcal{N}_d} \mathcal{P}^{\mathcal{G}}_{f,i} \big)\cap\mathcal{P}^{\mathcal{G}}$. The feasible voltage angle profile $\Theta_{f,F}$ of the followers can be obtained based on $\mathcal{P}_{f,F}^{\mathcal{G}}$ and \eqref{power_dc}. Then, the feasible action set of microgrids is $\mathcal{F}_f:=\mathcal{P}_{f,F}^{\mathcal{G}}\times \Theta_{f,F}$. Denote $(P_{f,-i}^{g},\theta_{f,-i})$ by the actions of all microgrids except the $i$'s one for convenience. The followers' game $G_f$ is coupled through the power flow constraints \eqref{power_dc}.

The cost function $J_{f,i}:\mathcal{P}_{f,i}^{\mathcal{G}}\times [0,\pi]\rightarrow \mathbb{R}$ for microgrid $i$ is given by 
\begin{equation}\label{utility}
J_{f,i}(P_i^g,\theta_i)=\psi_i P_i^g+\zeta(P_i^l-P_i^g)+\frac{1}{2}\eta_i^2\theta_i^2,\ \ i\in \mathcal{N}_d,
\end{equation}
where $\psi_i$ is the unit cost of generated power for microgrid $i$; $\zeta$ is the unit price of renewable energy for sale defined by the power market, and $\eta_i$ is a weighting parameter that indicates the importance of regulations of voltage angle at bus $i$. Note that term $\psi_i P_i^g+\zeta(P_i^l-P_i^g)$ captures the renewable generation costs and possible sale revenue of microgrid $i$. In addition, term $\frac{1}{2}\eta_i^2\theta_i^2$ indicates that smaller voltage phasor $\theta_i$ is preferred to improve the power quality in the grid.

Therefore, for a given set $\mathbf{P}_g^g:=[P_{N_d+1}^g,P_{N_d+2}^g,...,P_{N}^g]^T$ of power generations from the generator network, the follower (microgrid) $i$'s problem $(\mathrm{FP}_i)$ is given by
\begin{align*}
\mathrm{FP}_i: \quad\min\limits_{P_i^g,\theta_i}\quad &J_{f,i}(P_i^g,\theta_i)\\
\mathrm{s.t.}\quad &\mathrm{\mathbf{P}}=-\mathbf{B}\boldsymbol{\theta},\\
&0\leq P_i^g \leq P_{i,\max}^g,\ i\in\mathcal{N}_d.
\end{align*}
Note that the generators' output $\mathbf{P}_g$ in $\mathbf{P}$ presents in every $\mathrm{FP}_i$.

\subsection{Generator (Leader) Side Model}
For the generators in the smart grid, we denote $ \mathcal{P}^{\mathcal{G}}_{l,j}$ as the action set of player $j$, where
$\mathcal{P}^{\mathcal{G}}_{l,j}:=\{ P_j^g\in\mathbb{R}_+\ |\ 0\leq P_j^g \leq P^g_{j,\max} \},\ j\in\mathcal{N}_g.$
For convenience, the subscript $l$ stands for \textit{leader}. In addition, denote the feasible generation and voltage angle sets of generators as $\mathcal{P}_{l,F}^{\mathcal{G}}$ and $\Theta_{l,F}$, respectively, and $\mathcal{F}_l:=\mathcal{P}_{l,F}^{\mathcal{G}}\times \Theta_{l,F}$. Denote $(P_{l,-j}^{g},\theta_{l,-j})$ by the actions of all generators except the $j$'s one. Remind that the feasible action sets of microgrids and generators are coupled through \eqref{power_dc}. Each generator $j\in\mathcal{N}_g$ generates an amount of $P_j^g$ power at a cost $C_j(P_j^g)$, where $C_j:\mathcal{P}_{l,j}^{\mathcal{G}} \rightarrow \mathbb{R_+}$ is the cost function for generator $j$. Moreover, we assume that $C_j$ are smooth and convex functions, $\forall j\in\mathcal{N}_g$. A typical choice of $C_j$ is of quadratic form, i.e.,
$
C_j(P_j^g)=\frac{1}{2}a_j {(P_j^g)}^2+b_jP_j^g+c_j,\ j\in\mathcal{N}_g,$\cite{kirschen2004fundamentals,wood2012power},
where $a_j>0,\ b_j\geq 0$, and $c_j\geq 0$ are cost coefficients for generator $j$. The cost function for generator $j$ is 
$J_{l,j}(P_j^g,\theta_j)=C_j(P_j^g)+\frac{1}{2}\alpha_j\theta_j^2,\ j\in\mathcal{N}_g,$
where the second term captures the power quality issue. Specifically, smaller voltage phasor indicates higher power supply efficiency and quality. For generators, the global cost is represented by
$J_{l}(\mathbf{P}_g^g,\boldsymbol{\theta}_g)=\sum_{j\in\mathcal{N}_g}C_j(P_j^g)+\frac{1}{2}\alpha_j\theta_j^2$, which jointly takes the generation costs and power quality into account.
Therefore, the leader's optimization problem (OP\footnote{To avoid ambiguity, we use the abbreviation OP instead of LP to denote the leader's optimization problem.}) is formulated as
\begin{align*}
\mathrm{OP}: \quad\min\limits_{P_j^g,\theta_j}\quad &\sum_{j\in\mathcal{N}_g}C_j(P_j^g)+\frac{1}{2}\alpha_j\theta_j^2\\
\mathrm{s.t.}\quad &\mathrm{\mathbf{P}}=-\mathbf{B}\boldsymbol{\theta},\\
&0\leq P_j^g \leq P_{j,\max}^g,\ \forall j\in\mathcal{N}_g,\\
& \mathrm{Outcome\ strategy\ of}\ \mathrm{FP}_i,\ \forall i\in\mathcal{N}_d.
\end{align*}
OP can be seen as a modified economic power dispatch problem \cite{zhu2012game,zhu2013value,
mohsenian2010autonomous}, since OP includes the constraints from the microgrids layer. 

\textit{Connections between $\mathrm{OP}$ and $\mathrm{FP}_i$:} Remind that the decisions of microgrids yield by $\mathrm{FP}_i$, $\forall i\in\mathcal{N}_d$, are determined for a given generators' action $\mathbf{P}_g^g$, and these decisions constitute the constraints of the outcome strategy of $\mathrm{FP}_i$, $\forall i\in\mathcal{N}_d$, in the OP. Hence, due to this hierarchical smart grid structure, generators at the upper layer (solve OP) need to anticipate the microgrids' response $\mathbf{P}_d^g$ (solve $\mathrm{FP}_i, \forall i\in\mathcal{N}_d$) at the lower layer when making decisions. 

Note that the outcome strategy profile of microgrids is characterized by a Nash equilibrium. Therefore, the OP can be categorized into a mathematical program with equilibrium constraints (MPEC) \cite{luo1996mathematical,hobbs2000strategic}. The hierarchical generation game including the microgrids and generators constitutes a Stackelberg game denoted by $G$.

\subsection{Stackelberg Equilibrium}
The leaders (generators) determine their power generation and announce it to the followers (microgrids). The equilibrium of the microgrids in a Stackelberg game $G$ is any strategy that constitutes an optimal response to the one adopted and announced by the generators. Formally, the Stackelberg equilibrium (SE) is defined as follows.

\textit{Definition 1 (Stackelberg Equilibrium)}: The strategy set $(\mathbf{P}_d^{g*},\boldsymbol{\theta}_d^*,\mathbf{P}_g^{g*},\boldsymbol{\theta}_g^*)$ constitutes a Stackelberg equilibrium of game $G$ if the following conditions are satisfied:

($i$): $(\mathbf{P}_d^{g*},\boldsymbol{\theta}_d^*)\in\mathcal{F}_f$ is a Nash equilibrium for microgrids, i.e., $\forall i\in \mathcal{N}_d$,
 $$J_{f,i}(P_i^{g*},\theta_{i}^{*})\leq J_{f,i}(P_i^g,\theta_i),\ \forall (P_i^g,\theta_i)\in \Phi_i(P_{f,-i}^{g*},\theta_{f,-i}^{*}),$$
 where $\Phi_i(P_{f,-i}^{g*},\theta_{f,-i}^{*})$ is a projected action set defined by $\Phi_i(P_{f,-i}^{g*},\theta_{f,-i}^{*}):=\{(P_i^g,\theta_i):(P_i^g,\theta_i;P_{f,-i}^{g*},\theta_{f,-i}^{*})\in\mathcal{F}_f\}$.
 
($ii$): $(\mathbf{P}_g^{g*},\boldsymbol{\theta}_g^*)\in\mathcal{F}_l$ satisfies, $\forall i\in \mathcal{N}_g$, 
\begin{align*}
&J_l\left(\mathbf{P}_g^{g*},\boldsymbol{\theta}_g^*; \Gamma(\mathbf{P}_g^{g*},\boldsymbol{\theta}_g^*)\right)\\ 
& \leq J_l\left(P_i^g,P_{l,-i}^{g*},\theta_{i},\theta_{l,-i}^{*} ;\Gamma\left(P_i^g,P_{l,-i}^{g*},\theta_{i},\theta_{l,-i}^{*} \right)\right),
\end{align*}
where $\Gamma(\mathbf{P}_g^{g*},\boldsymbol{\theta}_g^*)$ denotes the optimal response of microgrids given $(\mathbf{P}_g^{g*},\boldsymbol{\theta}_g^*)$. 
For a given pair $(\mathbf{P}_g^g,\boldsymbol{\theta}_g)\in\mathcal{F}_l$ of generators, 
 the optimal responses of all microgrids constitute a Nash equilibrium satisfying condition ($i$).
 

In the following, our goal is to find the equilibrium strategies of generators and microgrids jointly.

\section{Stackelberg Game Analysis}\label{gameanalysis}
In this section, we analyze the strategic behaviors of microgrids and generators, respectively, and also show the existence and uniqueness of Stackelberg equilibrium of game $G$.

\subsection{Microgrid Side Analysis}
Before developing algorithms to find the solution to $\mathrm{FP}_i$, we obtain an analytical solution for microgrids in this subsection. Note that the generation profile $\mathbf{P}_g$ is fixed during the analysis.
First, plugging $\boldsymbol\theta=\mathbf{SP}$ into \eqref{utility} yields
\begin{equation}\label{utility2}
\tilde J_{f,i}(P_i^g,P_{-i}^g)=\psi_i P_i^g+\zeta(P_i^l-P_i^g)+\frac{1}{2}\eta_i^2 (\sum_{j\in \mathcal{N}} s_{ij} P_j)^2,
\end{equation}
for $i\in \mathcal{N}_d$, where $\tilde J_{f,i}:\mathcal{P}^{\mathcal{G}}_{f,i} \times \mathcal{P}^{\mathcal{G}}_{f,-i}\rightarrow \mathbb{R}$, and it is strictly convex on $P_i^g$. Therefore, the first-order optimality condition of \eqref{utility2} yields
$
\psi_i-\zeta+\eta_i^2(\sum_{j\in \mathcal{N}} s_{ij} P_j)s_{ii}=0.
$
By defining $g_i:=s_{ii}P_i$, $\bar{g}_{-i}:=\sum_{j\neq i\in \mathcal{N}}s_{ij}P_j$ and since $s_{ii}\neq 0$ by Lemma \ref{lemma_Smatrix}, we obtain
$g_i=\frac{\zeta-\psi_i}{\eta_i^2 s_{ii}} - \bar{g}_{-i}$
which can be rewritten as
$
P_i=\frac{1}{s_{ii}}(\frac{\zeta-\psi_i}{\eta_i^2 s_{ii}} - \bar{g}_{-i}),\ \ i\in \mathcal{N}_d.
$
For clarity, define 
$\gamma_i:=\frac{\zeta-\psi_i}{\eta_i^2 s_{ii}}$
and $P_i^{\max}:=P_{i,\max}^g-P_i^l$. Then, we further obtain
\begin{equation}\label{bestresponse}
\mathbf{H}\mathbf{P}_d^{*}=\mathbf{q},
\end{equation}
where $\mathbf{H}:=[ \frac{s_{ij}}{s_{ii}} ]_{i,j\in\mathcal{N}_d} ,\ \mathbf{q} := [q_i]_{i\in \mathcal{N}_d} = [ \frac{\gamma_i}{s_{ii}} -\sum_{j\in \mathcal{N}_g} \frac{s_{ij}}{s_{ii}} P_j]_{i\in \mathcal{N}_d} $, and $\mathbf{P}_d^{*}:= [P_i^*]_{i\in \mathcal{N}_d}=[P_i^{g*}-P_i^l]_{i\in \mathcal{N}_d}$.

\textbf{\textit{Remark:}} For a given $\mathbf{P}_g^g$, there exists a one-to-one mapping between $\mathbf{P}_d^g$ and $\boldsymbol\theta_d$ through \eqref{power_dc}. Hence, the strategy pair $(\mathbf{P}_d^g,\boldsymbol\theta_d)$ can be equivalently captured by $\mathbf{P}_d^g$. The same analysis applies for $\mathbf{P}_g^g$ and $(\mathbf{P}_g^g,\boldsymbol\theta_g)$ when given $\mathbf{P}_d^g$. In the following, we refer to $\mathbf{P}_g^g$ and $\mathbf{P}_d^g$ as the strategies of generators and microgrids, respectively, for clarity.

\begin{lemma}\label{H_invertible}
Matrix $\mathbf{H}$ is invertible.
\end{lemma}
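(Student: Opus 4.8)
The plan is to exhibit $\mathbf{H}$ as a similarity transform (or diagonal rescaling) of a principal submatrix of the symmetric positive definite matrix $-\mathbf{B}$, and then invoke positive definiteness. Concretely, recall $\mathbf{S} = -\mathbf{B}^{-1}$ and $\mathbf{H} = [s_{ij}/s_{ii}]_{i,j\in\mathcal{N}_d}$. Let $\mathbf{S}_{dd} := [s_{ij}]_{i,j\in\mathcal{N}_d}$ be the principal submatrix of $\mathbf{S}$ indexed by the microgrid buses, and let $\mathbf{D} := \mathrm{diag}(s_{11},\dots,s_{N_d N_d})$. Then $\mathbf{H} = \mathbf{D}^{-1}\mathbf{S}_{dd}$. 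Since $-\mathbf{B}$ is a symmetric reduced Laplacian, it is positive definite (its inverse exists by Kirchhoff's matrix-tree theorem, as already noted, and reduced Laplacians are in fact PD); hence $\mathbf{S} = -\mathbf{B}^{-1}$ is symmetric positive definite, and therefore so is any principal submatrix $\mathbf{S}_{dd}$. In particular $\mathbf{S}_{dd}$ is nonsingular. By Lemma \ref{lemma_Smatrix}, $s_{ii}>0$ for all $i$, so $\mathbf{D}$ is a well-defined invertible diagonal matrix. Consequently $\mathbf{H} = \mathbf{D}^{-1}\mathbf{S}_{dd}$ is a product of two invertible matrices, hence invertible, with $\mathbf{H}^{-1} = \mathbf{S}_{dd}^{-1}\mathbf{D}$.

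The key steps, in order, are: (1) identify $\mathbf{H} = \mathbf{D}^{-1}\mathbf{S}_{dd}$ from the definitions; (2) argue $\mathbf{S}$ is symmetric positive definite --- symmetry is given in Lemma \ref{lemma_Smatrix}, and positive definiteness follows because $-\mathbf{B}$ is a symmetric reduced Laplacian (equivalently, $-\mathbf{B} = \mathbf{M}^T\mathbf{M}$ for a suitable incidence-type matrix $\mathbf{M}$ of full column rank after removing the slack row/column, so it is PD, and the inverse of a symmetric PD matrix is symmetric PD); (3) pass to the principal submatrix $\mathbf{S}_{dd}$, which inherits positive definiteness and hence nonsingularity; (4) note $\mathbf{D}$ is invertible by $s_{ii}>0$; (5) conclude $\mathbf{H}$ is invertible as a product of invertible matrices.

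I expect the only genuine obstacle to be step (2): the excerpt merely states that $\mathbf{B}$ is invertible (via the matrix-tree theorem) and that $\mathbf{S}$ is symmetric with nonnegative entries, but it does not explicitly record that $-\mathbf{B}$ (equivalently $\mathbf{S}$) is \emph{positive definite}. One must therefore either cite the standard fact that the reduced (grounded) Laplacian of a connected graph is symmetric positive definite, or give the short quadratic-form argument: for any $\mathbf{x}\neq 0$, $\mathbf{x}^T(-\mathbf{B})\mathbf{x} = \sum_{(i,j)} (-B_{ij})(x_i - x_j)^2 + (\text{boundary terms from the grounded node}) > 0$ since all line susceptances give $-B_{ij}>0$ and the graph with the slack bus grounded is connected. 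If one prefers to avoid positive-definiteness entirely, an alternative is a direct argument: suppose $\mathbf{H}\mathbf{v}=0$; then $\mathbf{S}_{dd}\mathbf{v}=0$, and using the Schur-complement/block structure of $\mathbf{S}=-\mathbf{B}^{-1}$ one derives a contradiction with nonsingularity of $\mathbf{B}$ --- but the positive-definiteness route is cleaner and I would present that.
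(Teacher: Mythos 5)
Your proposal is correct and follows essentially the same route as the paper: both reduce the claim to invertibility of the principal submatrix $\mathbf{S}_{dd}$ of $\mathbf{S}$ indexed by the microgrid buses, combined with the diagonal rescaling by the positive entries $s_{ii}$ (the paper phrases this as $|\mathbf{S}_1|=|\mathbf{H}|\cdot\prod_{i\in\mathcal{N}_d}s_{ii}$, you as $\mathbf{H}=\mathbf{D}^{-1}\mathbf{S}_{dd}$). Your explicit justification that $\mathbf{S}$ is symmetric positive definite is a welcome tightening, since the paper's appeal to Sylvester's criterion from mere full rank tacitly assumes exactly that positive definiteness.
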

\begin{proof}
See Appendix \ref{app_H_invertible}.
\end{proof}

Regarding the microgrids' strategy, we have Theorem \ref{uniqueNE}.
\begin{theorem}[Theorem 1, \cite{chen2017game}]\label{uniqueNE}
For a given $\mathbf{P}_g$ and an appropriate $\mathbf{S}$, the renewable energy generation game $G_f$ admits a unique Nash equilibrium, and the net power injection of player $i$, $i\in\mathcal{N}_d$, to the grid is given by 
\begin{align}\label{netpower}
P_i=\begin{cases}
\begin{array}{ll}
-P_i^l, &\mathrm{if}\ \gamma_i \leq \bar{g}_{-i}-s_{ii}P_i^l,\\
P_i^{\max}, &\mathrm{if}\ \gamma_i \geq \bar{g}_{-i}+s_{ii}P_i^{\max},\\
\frac{1}{s_{ii}}(\gamma_i - \bar{g}_{-i}), &\mathrm{otherwise}.
\end{array}
\end{cases}
\end{align}
\end{theorem}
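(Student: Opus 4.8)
The plan is to first solve each microgrid's subproblem $\mathrm{FP}_i$ in closed form for a fixed generator profile $\mathbf{P}_g$ and fixed rival injections $P_{-i}$, then glue the resulting best responses into a fixed-point problem on the microgrids' joint action box and argue existence and uniqueness of that fixed point.

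First I would observe that, after substituting $\boldsymbol\theta=\mathbf{S}\mathbf{P}$, the objective of $\mathrm{FP}_i$ is $\tilde J_{f,i}(P_i^g,P_{-i}^g)$ in \eqref{utility2}, which is strictly convex in $P_i^g$, while the constraint $0\le P_i^g\le P_{i,\max}^g$ is equivalent to $-P_i^l\le P_i\le P_i^{\max}$. Hence $\mathrm{FP}_i$ is a scalar strictly convex program over a compact interval, so it has a unique minimizer, namely the Euclidean projection of the unconstrained stationary point onto $[-P_i^l,P_i^{\max}]$. Setting $\partial \tilde J_{f,i}/\partial P_i^g=\psi_i-\zeta+\eta_i^2 s_{ii}\sum_{j\in\mathcal{N}}s_{ij}P_j$ to zero gives the interior candidate $P_i=\tfrac{1}{s_{ii}}(\gamma_i-\bar g_{-i})$, and projecting it onto $[-P_i^l,P_i^{\max}]$ yields exactly the three-case formula \eqref{netpower}; here Lemma \ref{lemma_Smatrix} (in particular $s_{ii}>0$) is used both to divide by $s_{ii}$ and to ensure the derivative is strictly increasing in $P_i$. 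This shows each microgrid has a single-valued best response $\mathrm{BR}_i(P_{-i})$.

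Next I would assemble the best-response map $\mathrm{BR}=(\mathrm{BR}_i)_{i\in\mathcal{N}_d}$ on the compact convex box $\mathcal{B}:=\prod_{i\in\mathcal{N}_d}[-P_i^l,P_i^{\max}]$. Existence of a Nash equilibrium follows from continuity of $\mathrm{BR}$ via Brouwer's theorem (equivalently from the Debreu--Glicksberg--Fan theorem, using compact convex action sets and payoffs continuous and convex in the own variable). For uniqueness, I would make the ``appropriate $\mathbf{S}$'' hypothesis precise as strict row diagonal dominance of $\mathbf{H}=[s_{ij}/s_{ii}]_{i,j\in\mathcal{N}_d}$, i.e. $\sum_{j\in\mathcal{N}_d\setminus\{i\}} s_{ij}<s_{ii}$ for every $i\in\mathcal{N}_d$; since $s_{ij}\ge 0$ by Lemma \ref{lemma_Smatrix}, the affine part of $\mathrm{BR}_i$, namely $P_{-i}\mapsto \tfrac{\gamma_i}{s_{ii}}-\sum_{j\in\mathcal{N}_d\setminus\{i\}}\tfrac{s_{ij}}{s_{ii}}P_j-\sum_{j\in\mathcal{N}_g}\tfrac{s_{ij}}{s_{ii}}P_j$, is a contraction in the $\ell_\infty$ norm with modulus $\max_i\sum_{j\in\mathcal{N}_d\setminus\{i\}} s_{ij}/s_{ii}<1$. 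Because projection onto each interval $[-P_i^l,P_i^{\max}]$ is nonexpansive, $\mathrm{BR}$ is then a contraction on $(\mathcal{B},\|\cdot\|_\infty)$, so the Banach fixed-point theorem gives a unique fixed point, hence a unique Nash equilibrium; in the regime where no box constraint binds, this is consistent with the unique solution $\mathbf{P}_d^*=\mathbf{H}^{-1}\mathbf{q}$ of \eqref{bestresponse}, which exists by Lemma \ref{H_invertible}. An alternative, assumption-free uniqueness route is to note that the pseudo-gradient of $G_f$ has Jacobian $\mathrm{diag}(\eta_i^2 s_{ii})\,[s_{ij}]_{i,j\in\mathcal{N}_d}$, the product of a positive diagonal matrix with the positive-definite principal submatrix of $\mathbf{S}=-\mathbf{B}^{-1}$, hence strictly monotone, and then invoke Rosen's theorem.

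The main obstacle is pinning down and justifying the ``appropriate $\mathbf{S}$'' condition so that the best-response dynamics contract: one must be careful to work in the norm ($\ell_\infty$) in which both the diagonally dominant affine part and the box projection behave well, and to confirm that the fixed point of the projected map genuinely satisfies the Nash conditions rather than being a mere stationary point. Everything else --- strict convexity of the scalar subproblems, the projection formula \eqref{netpower}, and invertibility of $\mathbf{H}$ --- is routine given Lemma \ref{lemma_Smatrix} and Lemma \ref{H_invertible}.
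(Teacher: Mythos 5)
Your proposal is correct and follows essentially the same route as the paper: Theorem \ref{uniqueNE} is imported from \cite{chen2017game} and not re-proved here, but the in-text derivation preceding it --- strict convexity of \eqref{utility2} in $P_i^g$, the stationary point $P_i=\frac{1}{s_{ii}}(\gamma_i-\bar g_{-i})$, and projection onto $[-P_i^l,P_i^{\max}]$ --- is exactly your first step, and your reading of ``appropriate $\mathbf{S}$'' as diagonal dominance of $\mathbf{H}$ plus an $\ell_\infty$ contraction of the projected best-response map is precisely the mechanism behind condition \eqref{randomthem} of Theorem \ref{thm3}. One caveat on your ``assumption-free'' alternative: the product of a positive diagonal matrix with a symmetric positive-definite matrix need not have a positive-definite symmetric part, so the pseudo-gradient is not strictly monotone as you assert; Rosen's argument still works, but only after weighting by $r_i=1/(\eta_i^2 s_{ii})$ so that the weighted pseudo-gradient has Jacobian $\mathbf{S}_1\succ 0$ (diagonal strict convexity), not for the unweighted reason you give.
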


\subsection{Generator Side Analysis}\label{generator_analysis}
The generators in the grid are leaders and thus have a complete information of microgrids, such as their power loads, incentives of voltage regulations and generation costs. The inclusion of microgrids in the smart grid is to make the system greener and more efficient. For the followers, zero power generation clearly makes the microgrid highly underutilized, while maximum power generation puts strict constraints on the working hours and reliability issues of microgrids. Therefore, in general, generators will choose strategies that lead the microgrids achieving an inner solution at the equilibrium. Equivalently, generators regard that the best response dynamics of microgrids are given by \eqref{bestresponse}.

Therefore, via Lemma \ref{H_invertible}, the best response of microgrids can be written as $\mathbf{P}_d=\mathbf{H}^{-1}\mathbf{q}$,  and the optimization problem OP for the generators can be reformulated rigorously as
\begin{align*}
\mathrm{OP'}: \quad\min\limits_{P_j^g,\theta_j}\quad &\sum_{j\in\mathcal{N}_g}C_j(P_j^g)+\frac{1}{2}\alpha_j\theta_j^2\\
\mathrm{s.t.}\quad &\mathrm{\mathbf{P}}=-\mathbf{B}\boldsymbol{\theta},\\
&\mathbf{P}_d=\mathbf{H}^{-1}\mathbf{q},\\
&0\leq P_j^g \leq P_{j,\max}^g,\ \forall j\in\mathcal{N}_g.
\end{align*}
To simplify $\mathrm{OP'}$, we have the following Lemma.
\begin{lemma}\label{lemma3}
The constraints $\mathbf{P}=-\mathbf{B}\boldsymbol{\theta}$ and $\mathbf{P}_d=\mathbf{H}^{-1}\mathbf{q}$ can be captured by 
\begin{equation}\label{pg2}
\mathbf{T}_1\boldsymbol{\theta}_{g}+\mathbf{T}_2\mathbf{q}-\mathbf{P}_{g}=0,
\end{equation}
where $\mathbf{T}_1:=\mathbf{B}_3\mathbf{B}_1^{-1}\mathbf{B}_2-\mathbf{B}_4\in \mathbb{R}^{N_g\times N_g}$ and $\mathbf{T}_2:=\mathbf{B}_3\mathbf{B}_1^{-1}\mathbf{H}^{-1}\in \mathbb{R}^{N_g\times N_d}$.
\end{lemma}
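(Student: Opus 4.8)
\emph{Proof plan.} The idea is a Schur-complement-style block elimination: I would use the microgrids' best response $\mathbf{P}_d=\mathbf{H}^{-1}\mathbf{q}$ (legitimate by Lemma~\ref{H_invertible}) to remove the follower variables $\boldsymbol{\theta}_d$ and $\mathbf{P}_d$ from the DC power-flow identity $\mathbf{P}=-\mathbf{B}\boldsymbol{\theta}$, so that only $\boldsymbol{\theta}_g$, $\mathbf{q}$ and $\mathbf{P}_g$ survive. First I would partition $\mathbf{B}$ along the microgrid/generator split $\mathcal{N}=\mathcal{N}_d\cup\mathcal{N}_g$,
\[
\mathbf{B}=\begin{bmatrix}\mathbf{B}_1 & \mathbf{B}_2\\ \mathbf{B}_3 & \mathbf{B}_4\end{bmatrix},\qquad \mathbf{B}_1\in\mathbb{R}^{N_d\times N_d},\ \ \mathbf{B}_4\in\mathbb{R}^{N_g\times N_g},
\]
so that $\mathbf{P}=-\mathbf{B}\boldsymbol{\theta}$ decomposes into the two block rows $\mathbf{P}_d=-\mathbf{B}_1\boldsymbol{\theta}_d-\mathbf{B}_2\boldsymbol{\theta}_g$ and $\mathbf{P}_g=-\mathbf{B}_3\boldsymbol{\theta}_d-\mathbf{B}_4\boldsymbol{\theta}_g$.

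The opening step is to establish that $\mathbf{B}_1$ is invertible. Since $-\mathbf{B}$ is the symmetric reduced (grounded) Laplacian of the connected network, it is positive definite, and $-\mathbf{B}_1$ is the principal submatrix of $-\mathbf{B}$ indexed by $\mathcal{N}_d$, hence positive definite as well; therefore $\mathbf{B}_1^{-1}$ exists. Next I would substitute $\mathbf{P}_d=\mathbf{H}^{-1}\mathbf{q}$ into the first block row and solve for the follower angles,
\[
\boldsymbol{\theta}_d=-\mathbf{B}_1^{-1}\mathbf{H}^{-1}\mathbf{q}-\mathbf{B}_1^{-1}\mathbf{B}_2\boldsymbol{\theta}_g .
\]

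Finally, plug this expression for $\boldsymbol{\theta}_d$ into the second block row $\mathbf{P}_g=-\mathbf{B}_3\boldsymbol{\theta}_d-\mathbf{B}_4\boldsymbol{\theta}_g$ and collect terms:
\[
\mathbf{P}_g=\bigl(\mathbf{B}_3\mathbf{B}_1^{-1}\mathbf{B}_2-\mathbf{B}_4\bigr)\boldsymbol{\theta}_g+\mathbf{B}_3\mathbf{B}_1^{-1}\mathbf{H}^{-1}\mathbf{q}=\mathbf{T}_1\boldsymbol{\theta}_g+\mathbf{T}_2\mathbf{q},
\]
which rearranges to \eqref{pg2}. Conversely, given any $(\boldsymbol{\theta}_g,\mathbf{P}_g)$ satisfying \eqref{pg2}, one recovers a full power-flow/best-response solution by setting $\mathbf{P}_d:=\mathbf{H}^{-1}\mathbf{q}$ and $\boldsymbol{\theta}_d$ as in the displayed formula, so the pair of constraints in $\mathrm{OP}'$ is genuinely equivalent to the single relation \eqref{pg2}, not merely implied by it.

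I do not expect a real obstacle: the computation is routine block elimination, and the only substantive point is the invertibility of $\mathbf{B}_1$, which is why I would isolate it as the first step; once $\mathbf{B}_1^{-1}$ is in hand the rest is bookkeeping with the block partition. One caveat worth flagging when \eqref{pg2} is later used inside $\mathrm{OP}'$ is that $\mathbf{q}$ is itself the affine function of $\mathbf{P}_g$ defined just after \eqref{bestresponse}, so \eqref{pg2} is ultimately a relation in $\boldsymbol{\theta}_g$ and $\mathbf{P}_g$ alone.
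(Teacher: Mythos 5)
Your proof is correct and follows essentially the same block-elimination route as the paper: partition $\mathbf{B}$, solve the microgrid block row for $\boldsymbol{\theta}_d$ using $\mathbf{P}_d=\mathbf{H}^{-1}\mathbf{q}$, and substitute into the generator block row. Your explicit justification that $\mathbf{B}_1$ is invertible (as a principal submatrix of the positive definite reduced Laplacian $-\mathbf{B}$) is a welcome addition, since the paper simply asserts this fact.
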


\begin{proof}
See Appendix \ref{app_lemma3}.
\end{proof}

One challenge in $\mathrm{OP'}$ is the box constraint for generators, i.e., $0\leq P_j^g \leq P_{j,\max}^g,\ \forall j\in\mathcal{N}_g$. Note that in the objective function of $\mathrm{OP'}$, each generator $j$, $j\in\mathcal{N}_g$, has a weighting parameter $\alpha_j$ indicating the incentive of regulating the voltage angle at his bus. When generator $j$ cares more about the generation cost rather than the voltage regulation, then small $\alpha_j$ is chosen. In this case, due to the monotonically increasing quadratic function $C_j(P_j^g)$, the amount of generated power of generators should be relatively small to optimize the objective, and $0\leq P_j^g \leq P_{j,\max}^g$ is naturally satisfied. We focus on this scenario and aim to obtain analytical results for generators in the following. The leader's problem $\mathrm{OP'}$ can be simplified as
\begin{align*}
\mathrm{OP''}: \quad\min\limits_{P_j^g,\theta_j}\quad &\sum_{j\in\mathcal{N}_g}C_j(P_j^g)+\frac{1}{2}\alpha_j\theta_j^2\\
\mathrm{s.t.}\quad &\mathbf{T}_1\boldsymbol{\theta}_{g}+\mathbf{T}_2\mathbf{q}-\mathbf{P}_{g}=0.
\end{align*}
The Lagrangian of $\mathrm{OP''}$ is
$$L(P_j^g,\theta_j,\boldsymbol{\mu})=\sum_{j\in\mathcal{N}_g}C_j(P_j^g)+\frac{1}{2}\alpha_j\theta_j^2+\boldsymbol{\mu}^T(\mathbf{T}_1\boldsymbol{\theta}_{g}+\mathbf{T}_2\mathbf{q}-\mathbf{P}_{g}),$$
where $\boldsymbol{\mu}:=[\mu_1,\mu_2,...,\mu_{N_{g}}]$ is the Lagrange multiplier vector.
The first-order optimality condition $\nabla L=0$ yields
\begin{align}
\frac{\partial L}{\partial P_j^g}&=a_jP_j^g+b_j-\mu_j-\sum_{i\in\mathcal{N}_d}\boldsymbol{\mu}^T {T}_2(i) \frac{s_{ij}}{s_{ii}} =0,\label{kkt1}\\
 \frac{\partial L}{\partial \theta_j}&=\alpha_j\theta_j+\boldsymbol{\mu}^T {T}_1(j)=0,\ \forall j\in\mathcal{N}_g,\label{kkt2}
\end{align}
where $\mathbf{T}_1=[{T}_1(1),...,{T}_1(N_g)]$ and $\mathbf{T}_2=[{T}_2(1),...,{T}_2(N_d)]$.
In addition, the complementarity slackness condition is the same as constraint \eqref{pg2}. Then, putting \eqref{pg2}, \eqref{kkt1} and \eqref{kkt2} in a matrix form yields
\begin{equation}\label{system}
\begin{split}
\mathbf{WX} = \mathbf{b},
\end{split}
\end{equation}
where $\mathbf{X}=[{\mathbf{P}_g^g}^T,\boldsymbol{\mu}^T,{\boldsymbol{\theta}_g}^T]^T;\ \mathbf{0}=[0]_{N_g\times N_g}$; $\mathbf{I}$ is an $N_g$-dimensional identity matrix;
\begin{align*}
&\mathbf{W}=\begin{bmatrix}
\mathbf{A}_1 &\mathbf{T}_3-\mathbf{I}&\mathbf{0}\\
\mathbf{0}&\mathbf{T}_1^T&\mathbf{A}_2\\
\mathbf{T}_4-\mathbf{I}&\mathbf{0}&\mathbf{T}_1
\end{bmatrix};\\
& \mathbf{A}_1 = \mathrm{diag}(a_1,a_2,...,a_{N_g});\ \mathbf{A}_2 = \mathrm{diag}(\alpha_1,\alpha_2,...,\alpha_{N_g});\\
&\mathbf{b}=[-b_1,...,-b_{N_g},0,...,0,T_5(1),...,T_5(N_g)]^T;\\
&T_3(i,j)=-\sum_{p\in \mathcal{N}_d}T_2(j,p) \frac{s_{pi}}{s_{pp}} ;\\ 
&T_4(i,j)=-\sum_{p\in \mathcal{N}_d}T_2(i,p) \frac{s_{p(p+j)}}{s_{pp}} ;\\
&T_5(i)=-\sum_{q\in\mathcal{N}_d}T_2(i,q)\frac{\gamma_q}{s_{qq}},\ \forall i,j\in\{1,2,...,N_g\}.
\end{align*}
Hence, to solve $\mathrm{OP''}$, we need to solve the system of equations \eqref{system}. The invertibility of $\mathbf{W}$ is crucial for the uniqueness of SE solution, and we have the following lemma.

\begin{lemma}\label{invertible_W}
Matrix $\mathbf{W}$ in \eqref{system} is invertible.
\end{lemma}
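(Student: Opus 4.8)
The plan is to show that the square matrix $\mathbf{W}$ of size $3N_g\times 3N_g$ has trivial kernel, i.e., that $\mathbf{W}\mathbf{X}=\mathbf{0}$ forces $\mathbf{X}=\mathbf{0}$. Writing $\mathbf{X}=[\mathbf{p}^T,\boldsymbol{\mu}^T,\boldsymbol{\vartheta}^T]^T$ with blocks in $\mathbb{R}^{N_g}$, the homogeneous system reads $\mathbf{A}_1\mathbf{p}+(\mathbf{T}_3-\mathbf{I})\boldsymbol{\mu}=\mathbf{0}$, $\mathbf{T}_1^T\boldsymbol{\mu}+\mathbf{A}_2\boldsymbol{\vartheta}=\mathbf{0}$, and $(\mathbf{T}_4-\mathbf{I})\mathbf{p}+\mathbf{T}_1\boldsymbol{\vartheta}=\mathbf{0}$. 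The natural route is an ``energy'' argument: take inner products of the three block-rows with $\mathbf{p}$, $\boldsymbol{\vartheta}$, and $\boldsymbol{\mu}$ respectively (or the appropriate combination), and combine them so that the cross terms coming from $\mathbf{T}_1$, $\mathbf{T}_3$, $\mathbf{T}_4$ cancel, leaving a sum of manifestly nonnegative quantities $\mathbf{p}^T\mathbf{A}_1\mathbf{p}+\boldsymbol{\vartheta}^T\mathbf{A}_2\boldsymbol{\vartheta}$ (plus possibly a term in $\boldsymbol{\mu}$) that must vanish. Since $\mathbf{A}_1=\mathrm{diag}(a_j)$ with $a_j>0$ and $\mathbf{A}_2=\mathrm{diag}(\alpha_j)$ with $\alpha_j>0$, this yields $\mathbf{p}=\mathbf{0}$ and $\boldsymbol{\vartheta}=\mathbf{0}$; feeding these back into the block-rows then forces $(\mathbf{T}_3-\mathbf{I})\boldsymbol{\mu}=\mathbf{0}$ and $\mathbf{T}_1^T\boldsymbol{\mu}=\mathbf{0}$, and one of these (most likely $\mathbf{T}_1$ invertible, hence $\mathbf{T}_1^T$ invertible) gives $\boldsymbol{\mu}=\mathbf{0}$.

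Concretely, the key steps in order are: (1) From Lemma~\ref{lemma3}, $\mathbf{T}_1=\mathbf{B}_3\mathbf{B}_1^{-1}\mathbf{B}_2-\mathbf{B}_4$; I would first establish that $\mathbf{T}_1$ is invertible. Here $\mathbf{B}_1,\mathbf{B}_2,\mathbf{B}_3,\mathbf{B}_4$ are the blocks of $-\mathbf{B}$ (a symmetric positive definite reduced Laplacian by the matrix-tree theorem), so $\mathbf{T}_1$ is, up to sign, the Schur complement of the $\mathcal{N}_d$-block in $-\mathbf{B}$ (using $\mathbf{B}_3=\mathbf{B}_2^T$); the Schur complement of an invertible principal submatrix of an invertible matrix is invertible, and in fact of definite sign. (2) Identify the precise relationship between $\mathbf{T}_3,\mathbf{T}_4$ and $\mathbf{T}_1,\mathbf{T}_2$: from the definitions, $T_3(i,j)=-\sum_p T_2(j,p)s_{pi}/s_{pp}$ is essentially $(\mathbf{H}^{-T}\mathbf{T}_2^T)$-type, i.e., $\mathbf{T}_3$ and $\mathbf{T}_4$ are transposes/images of $\mathbf{T}_2$ under the same $\mathbf{H}$-weighting that appears in \eqref{kkt1}; I expect $\mathbf{T}_4=\mathbf{T}_3^T$ or a closely related identity, which is exactly what makes the cross terms cancel in the energy argument. (3) Carry out the inner-product combination, cancel cross terms, conclude $\mathbf{p}=\boldsymbol{\vartheta}=\mathbf{0}$, then $\boldsymbol{\mu}=\mathbf{0}$ from invertibility of $\mathbf{T}_1$.

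The main obstacle I anticipate is step (2)/(3): verifying that the off-diagonal coupling blocks line up so that the cross terms genuinely cancel. The matrix $\mathbf{W}$ is not symmetric, so one cannot simply invoke positive-definiteness; the structure $\mathbf{T}_3-\mathbf{I}$ in row 1 versus $\mathbf{T}_1^T$ in row 2 versus $\mathbf{T}_4-\mathbf{I}$ and $\mathbf{T}_1$ in row 3 has to be reconciled, and this requires unwinding the definitions of $T_3,T_4$ through $\mathbf{H}=[s_{ij}/s_{ii}]$ and $\mathbf{T}_2=\mathbf{B}_3\mathbf{B}_1^{-1}\mathbf{H}^{-1}$ carefully. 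An alternative, possibly cleaner, route avoids the explicit entrywise $T_3,T_4$ formulas altogether: go back to $\mathrm{OP''}$, substitute the equality constraint $\mathbf{P}_g=\mathbf{T}_1\boldsymbol{\theta}_g+\mathbf{T}_2\mathbf{q}$ (with $\mathbf{q}$ affine in $\mathbf{P}_g$, hence this is really a linear constraint expressing $\boldsymbol{\theta}_g$ affinely in $\mathbf{P}_g^g$ once $\mathbf{T}_1$ is invertible), reduce to an unconstrained strictly convex quadratic in $\mathbf{P}_g^g$ whose Hessian is $\mathbf{A}_1+(\text{something})^T\mathbf{A}_2(\text{something})\succ 0$, and note that \eqref{system} is precisely the (necessary and sufficient) stationarity system of that strictly convex program — hence it has a unique solution, so $\mathbf{W}$ is invertible. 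I would present the energy argument as the main proof and mention the convexity reduction as the conceptual reason it works.
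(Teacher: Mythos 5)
Your proposal is correct, and it takes a genuinely different route from the paper. The paper argues column independence directly on the three block columns $\mathtt{C}_1=[\mathbf{A}_1,\mathbf{0},\mathbf{T}_4-\mathbf{I}]^T$, $\mathtt{C}_2=[\mathbf{T}_3-\mathbf{I},\mathbf{T}_1^T,\mathbf{0}]^T$, $\mathtt{C}_3=[\mathbf{0},\mathbf{A}_2,\mathbf{T}_1]^T$: it checks that each \emph{pair} of blocks is of full column rank (using positivity of $\mathbf{A}_1,\mathbf{A}_2$ and invertibility of $\mathbf{T}_1$) and concludes from there; the only substantive lemma it proves is the one you also identify, namely that $-\mathbf{T}_1$ is the Schur complement of $\mathbf{B}_1$ in $\mathbf{B}$ and hence nonsingular. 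Your energy/kernel argument instead recognizes $\mathbf{W}$ as the KKT matrix of the equality-constrained strictly convex program $\mathrm{OP''}$, with Hessian block $\mathrm{diag}(\mathbf{A}_1,\mathbf{A}_2)$ and constraint Jacobian $[\mathbf{T}_4-\mathbf{I},\ \mathbf{T}_1]$ of full row rank (again via $\mathbf{T}_1$); the cross-term cancellation you worry about in step (2) does go through, because $\mathbf{T}_3-\mathbf{I}$ in \eqref{kkt1} is precisely the transpose of $\mathbf{T}_4-\mathbf{I}$ in \eqref{pg2} — both are built from the same $\mathbf{T}_2(\cdot,p)\,s_{pj}/s_{pp}$ sums, read once as the Jacobian and once as its adjoint (the paper's entrywise indices for $T_3,T_4$ contain offset typos, but the intended identity $\mathbf{T}_3=\mathbf{T}_4^T$ holds). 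What your approach buys is rigor and economy: pairwise full column rank of block pairs does not in general imply full rank of the concatenation, so the paper's step "similarly, $[\mathtt{C}_1,\mathtt{C}_2]$ and $[\mathtt{C}_2,\mathtt{C}_3]$ are of full column rank, hence $\mathbf{W}$ is invertible" leaves a gap that your kernel argument closes; moreover your version needs only $a_j>0$ and $\mathbf{T}_1$ nonsingular (with $\alpha_j\geq 0$ sufficing, since $\mathbf{p}=\mathbf{0}$ already forces $\boldsymbol{\vartheta}=\mathbf{0}$ through the third block row). The convexity reduction you sketch as an alternative is the same fact stated at the level of the optimization problem and would serve equally well.
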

\begin{proof}
See Appendix \ref{app_invertible_W}.
\end{proof}

Based on Lemma \ref{invertible_W}, we obtain the following corollary.

\begin{corollary}
A unique Stackelberg equilibrium solution exists in the game $G$, since matrix $\mathbf{W}$ is invertible.
\end{corollary}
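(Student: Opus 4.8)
The plan is to read off the uniqueness of the Stackelberg equilibrium from Definition 1 by treating its two conditions (i) and (ii) separately and then combining them. First I would recall, via the Remark following \eqref{bestresponse}, that a strategy profile of $G$ is fully described by the generation vectors $(\mathbf{P}_d^g,\mathbf{P}_g^g)$, since $\boldsymbol{\theta}_d$ and $\boldsymbol{\theta}_g$ are recovered from these through the power-flow relation \eqref{power_dc}. Condition (i) is then taken care of by Theorem \ref{uniqueNE}: for every fixed leader action $\mathbf{P}_g^g$ the follower game $G_f$ possesses a unique Nash equilibrium, and in the operative regime (the inner-solution regime argued for at the beginning of Section \ref{generator_analysis}) it equals $\mathbf{P}_d=\mathbf{H}^{-1}\mathbf{q}$, which is well defined because $\mathbf{H}$ is invertible by Lemma \ref{H_invertible}. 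Hence the followers' best-response map $\Gamma$ is single-valued, so an SE is completely determined by the leader's optimal action.

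For condition (ii), I would observe that substituting $\Gamma$ into OP and invoking Lemma \ref{lemma3} turns the leader's problem into $\mathrm{OP''}$, i.e., the minimization of a convex quadratic objective (each $C_j$ convex, the voltage terms convex) over the affine set defined by \eqref{pg2}. Since the feasible set is affine, a constraint qualification holds automatically, so the KKT conditions \eqref{pg2}, \eqref{kkt1}, \eqref{kkt2} are both necessary and sufficient for global optimality of $\mathrm{OP''}$; equivalently, $\mathbf{X}=[{\mathbf{P}_g^g}^T,\boldsymbol{\mu}^T,{\boldsymbol{\theta}_g}^T]^T$ together with an optimal multiplier is optimal if and only if it solves the linear system $\mathbf{WX}=\mathbf{b}$ in \eqref{system}. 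By Lemma \ref{invertible_W}, $\mathbf{W}$ is nonsingular, so this system has exactly one solution $\mathbf{X}^{*}=\mathbf{W}^{-1}\mathbf{b}$. Its primal block yields existence of an optimal leader action $(\mathbf{P}_g^{g*},\boldsymbol{\theta}_g^{*})$ — a KKT point is globally optimal by convexity — and uniqueness follows because any two optimal leader actions, each completed by a KKT multiplier, would both solve $\mathbf{WX}=\mathbf{b}$ and hence coincide.

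Finally I would stitch the pieces together: feeding the unique leader action $(\mathbf{P}_g^{g*},\boldsymbol{\theta}_g^{*})$ back into Theorem \ref{uniqueNE} gives the unique follower equilibrium $(\mathbf{P}_d^{g*},\boldsymbol{\theta}_d^{*})=\Gamma(\mathbf{P}_g^{g*},\boldsymbol{\theta}_g^{*})$, and the resulting tuple $(\mathbf{P}_d^{g*},\boldsymbol{\theta}_d^{*},\mathbf{P}_g^{g*},\boldsymbol{\theta}_g^{*})$ satisfies conditions (i) and (ii) of Definition 1 and is the only tuple that does, so it is the unique SE of $G$.

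I expect the main obstacle to be not the linear algebra but the bookkeeping that certifies replacing the original hierarchical game $G$ by the reduced program $\mathrm{OP''}$ without adding or dropping equilibria: one must confirm that the inner-solution restriction used to pass from OP to $\mathrm{OP'}$ and then $\mathrm{OP''}$ is precisely the regime in which the corollary is asserted, that the follower best response is genuinely single-valued there so that $\Gamma$ is a function rather than a set-valued map, and that the system \eqref{system} is an exact optimality reformulation of $\mathrm{OP''}$. Once those reductions are in place, Lemma \ref{invertible_W} finishes the argument essentially mechanically.
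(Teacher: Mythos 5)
Your proposal is correct and follows essentially the same route as the paper: the paper's own proof is a two-sentence version of exactly this decomposition — unique follower Nash equilibrium for each leader action by Theorem \ref{uniqueNE}, unique leader solution from the invertibility of $\mathbf{W}$ in \eqref{system}, hence a unique SE. Your additional bookkeeping (single-valuedness of $\Gamma$ via Lemma \ref{H_invertible}, KKT sufficiency for the convex affinely-constrained $\mathrm{OP''}$, and the caveat that the whole reduction lives in the inner-solution regime assumed at the start of Section \ref{generator_analysis}) merely makes explicit what the paper leaves implicit.
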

\begin{proof}
From Theorem \ref{uniqueNE}, we know that each microgrid has a unique solution in the Nash game $G_f$ for a given $\mathbf{P}_g$. In addition, since $\mathbf{W}$ is invertible, generators admit a unique solution $\mathbf{P}_g$ in the leader's problem. Therefore, game $G$ possesses a unique Stackelberg equilibrium.
\end{proof}

This section has characterized the SE solution to game $G$ through analysis on both microgrid and generator sides. In the ensuing section, we will discuss methodologies and design algorithms to find the SE.

\section{Distributed Update Schemes}\label{algorithms}
We know that game $G$ has a unique equilibrium solution from Section \ref{gameanalysis}. In this section, we aim to design update schemes for the generators and microgrids, respectively, to compute the Stackelberg equilibrium strategy.

\subsection{Microgrid Side Update Schemes}\label{microgrid_algorithms}
The update schemes of microgrids mainly follow the previous work \cite{chen2017game}, and we present them briefly as follows for completeness. 

\subsubsection{Iterative Update Algorithm (IUA)}
The IUA is a scheme that all microgrids update their strategies simultaneously. At time step $n$, the update for microgrid $i$, $i\in\mathcal{N}_d$, is
\begin{equation}\label{iterativeupdate}
\begin{split}
P_i^{(n+1)}&=\Psi_i(\gamma_i,\bar g_{-i}^{(n)})=\min\Big( P_i^{\max},\ \max \big[ -P_i^l,\\
&\qquad\frac{1}{s_{ii}}(\gamma_i - \sum\limits_{j\in \mathcal{N}_g}s_{ij}P_j 
-\sum\limits_{j\neq i\in \mathcal{N}_d}s_{ij}P_j^{(n)} )\big]\Big).
\end{split}
\end{equation}

\subsubsection{Random Update Algorithm (RUA)}
The IUA requires that every player updates their decisions in parallel which is impractical in cases without synchronization mechanism. One more practical update scheme is RUA. Specifically, microgrids update their strategies with a predefined probability $0<\tau_i<1,\ i\in\mathcal{N}_d$ at each step. The RUA is given by
\begin{equation*}\label{randomupdate}
P_i^{(n+1)}= \begin{cases}
\begin{array}{ll}
{\Psi_i(\gamma_i,\bar g_{-i}^{(n)})}, &\mathrm{with\ probability}\ \tau_i,\\
P_i^{(n)}, &\mathrm{with\ probability}\ 1-\tau_i,
\end{array}
\end{cases}
\end{equation*}
where $\Psi_i$ is defined in \eqref{iterativeupdate}.

\subsubsection{PMU-Enabled Distributed Algorithm (PDA)}\label{tech_algorithm}
The PDA does not require the synchronization mechanism as that in IUA. Its update fashion is similar to RUA but requires much less information. Notice that $\boldsymbol\theta=\mathbf{SP}$ leads to
$
\sum\limits_{j\in \mathcal{N}_g}s_{ij}P_j+\sum\limits_{j\neq i\in \mathcal{N}_d}s_{ij}P_j=\theta_i-s_{ii}P_i,\ \forall i\in\mathcal{N}_d
$. By 
incorporating it into \eqref{iterativeupdate}, each microgrid can update the decision by knowing the voltage angle at his bus at the current step.  Therefore, the PDA for player $i$, $i\in\mathcal{N}_d$, is 
\begin{equation}\label{pmu_update}
\begin{split}
P_i^{(n+1)}=\min\Big( P_i^{\max},\ &\max \big[ -P_i^l,\\
&\frac{1}{s_{ii}}(\gamma_i -\theta_i^{(n)} +s_{ii}P_i^{(n)} )\big]\Big).
\end{split}
\end{equation}


A critical property of the update algorithm is its convergence. Specifically, we have the following theorem.
\begin{theorem}[Theorem 3, \cite{chen2017game}]\label{thm3}
The PMU-enabled distributed algorithm of microgrid is globally stable and can converge to the unique equilibrium point almost surely if
\begin{align}
\bar{\tau}\cdot \max_{i,j\neq i\in \mathcal{N}_d} \frac{s_{ij}}{s_{ii}}(N_d-1)<\underline{\tau},\label{randomthem}
\end{align}
where $\bar{\tau}$ and $\underline{\tau}$ denote the upper and lower bounds of probability $\tau_i$, respectively, $\forall i\in \mathcal{N}_d$.
\end{theorem}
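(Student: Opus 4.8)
\emph{Proof plan.} The plan is to view the PDA as a randomized fixed-point iteration for the clipped best-response map and to prove that it is a strict contraction \emph{in expectation} under the $\ell_1$ norm, from which global stability and almost-sure convergence follow. First I would note, via the identity $\sum_{j\in\mathcal{N}_g}s_{ij}P_j+\sum_{j\neq i\in\mathcal{N}_d}s_{ij}P_j=\theta_i-s_{ii}P_i$ (i.e.\ $\boldsymbol\theta=\mathbf{SP}$ written out), that the PDA \eqref{pmu_update} is algebraically the same stochastic iteration as the RUA built on the map $\Psi_i$ of \eqref{iterativeupdate}: at each step $n$, microgrid $i$ independently sets $P_i^{(n+1)}=\Psi_i(\gamma_i,\bar g_{-i}^{(n)})$ with probability $\tau_i\in[\underline\tau,\bar\tau]$ and $P_i^{(n+1)}=P_i^{(n)}$ otherwise, with $\mathbf{P}_g$ fixed throughout. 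Let $\mathbf{P}^{*}=\mathbf{P}_d^{*}$ be the unique Nash equilibrium of $G_f$ from Theorem \ref{uniqueNE}; comparing \eqref{netpower} with \eqref{iterativeupdate} shows $P_i^{*}=\Psi_i(\gamma_i,\bar g_{-i}^{*})$, so $\mathbf{P}^{*}$ is a fixed point of the projected best response. Put $M_i^{(n)}:=|P_i^{(n)}-P_i^{*}|\ge 0$.

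The next step is a one-step estimate. Since $x\mapsto\min(P_i^{\max},\max(-P_i^l,x))$ is $1$-Lipschitz and the generator terms of $\bar g_{-i}$ are constant during the microgrids' updates, whenever $i$ updates at step $n$,
\begin{equation*}
M_i^{(n+1)}=\bigl|\Psi_i(\gamma_i,\bar g_{-i}^{(n)})-\Psi_i(\gamma_i,\bar g_{-i}^{*})\bigr|\le\frac{1}{s_{ii}}\sum_{j\neq i\in\mathcal{N}_d}s_{ij}\,M_j^{(n)},
\end{equation*}
where $s_{ij}\ge 0$ and $s_{ii}>0$ from Lemma \ref{lemma_Smatrix} let the coefficients come out of the absolute value; if $i$ does not update then $M_i^{(n+1)}=M_i^{(n)}$. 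Writing $r_{ij}:=s_{ij}/s_{ii}$ and conditioning on the past $\mathcal{H}_n$, this gives $\mathbb{E}[M_i^{(n+1)}\mid\mathcal{H}_n]\le(1-\tau_i)M_i^{(n)}+\tau_i\sum_{j\neq i\in\mathcal{N}_d}r_{ij}M_j^{(n)}$. Summing over $i\in\mathcal{N}_d$ and interchanging the double sum,
\begin{equation*}
\mathbb{E}\Bigl[\sum_{i\in\mathcal{N}_d}M_i^{(n+1)}\ \Big|\ \mathcal{H}_n\Bigr]\le\sum_{j\in\mathcal{N}_d}\Bigl[(1-\tau_j)+\sum_{i\neq j}\tau_i r_{ij}\Bigr]M_j^{(n)}\le\beta\sum_{j\in\mathcal{N}_d}M_j^{(n)},
\end{equation*}
with $\beta:=(1-\underline\tau)+\bar\tau\max_{i,k\neq i}\frac{s_{ik}}{s_{ii}}(N_d-1)$, using $\sum_{i\neq j}r_{ij}\le(N_d-1)\max_{i,k\neq i}r_{ik}$. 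Hypothesis \eqref{randomthem} is exactly $\bar\tau\max_{i,k\neq i}\frac{s_{ik}}{s_{ii}}(N_d-1)<\underline\tau$, hence $\beta<1$; taking full expectations and iterating gives $\mathbb{E}\|\mathbf{P}^{(n)}-\mathbf{P}^{*}\|_1\le\beta^n\|\mathbf{P}^{(0)}-\mathbf{P}^{*}\|_1$ for every initial profile, which is global stability in the mean.

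To upgrade this to almost-sure convergence I would observe $\sum_{n\ge0}\mathbb{E}\|\mathbf{P}^{(n)}-\mathbf{P}^{*}\|_1\le\|\mathbf{P}^{(0)}-\mathbf{P}^{*}\|_1/(1-\beta)<\infty$, so by Tonelli $\mathbb{E}\bigl[\sum_n\|\mathbf{P}^{(n)}-\mathbf{P}^{*}\|_1\bigr]<\infty$ and hence $\sum_n\|\mathbf{P}^{(n)}-\mathbf{P}^{*}\|_1<\infty$ almost surely, so $\mathbf{P}^{(n)}\to\mathbf{P}^{*}$ almost surely; the affine correspondences among $\mathbf{P}_d$, $\mathbf{P}_d^g$ and $\boldsymbol\theta_d$ then transfer this to the original variables. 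I expect the main obstacle to be the choice of norm: a coordinatewise ($\ell_\infty$) contraction estimate only yields the non-matching condition $\max_{i,k\neq i}\frac{s_{ik}}{s_{ii}}(N_d-1)<1$, and it is the $\ell_1$ / column-sum accounting above --- crucially summing over the \emph{updated} index $i$ rather than bounding each coordinate separately --- that produces the sharp threshold \eqref{randomthem} and puts the unequal probabilities $\underline\tau,\bar\tau$ in the right place. A secondary point requiring care is verifying through \eqref{netpower} that $\mathbf{P}^{*}$ is a genuine fixed point of the clipped map, so the projection is nonexpansive about it and the generator terms indeed cancel in the estimate.
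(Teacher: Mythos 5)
Your argument is correct and complete: the reduction of the PDA to the clipped best-response map via $\theta_i-s_{ii}P_i=\bar g_{-i}$, the identification of the Theorem~\ref{uniqueNE} equilibrium as a fixed point of $\Psi_i$, the $1$-Lipschitz clipping bound with cancellation of the fixed generator terms, and the $\ell_1$ column-sum accounting that yields the contraction factor $\beta=(1-\underline\tau)+\bar\tau\,\max_{i,k\neq i}\tfrac{s_{ik}}{s_{ii}}(N_d-1)<1$ under \eqref{randomthem}, followed by the Tonelli/summability upgrade to almost-sure convergence, all hold. The paper itself does not reproduce a proof of this statement (it is imported verbatim from the cited prior work), but your derivation is exactly the expected-contraction route that produces the stated threshold with $\bar\tau$ and $\underline\tau$ placed as in \eqref{randomthem}, and your closing remark correctly identifies why a coordinatewise $\ell_\infty$ estimate would not recover the unequal-probability form of the condition.
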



\subsection{Generator Side Update Schemes}\label{generator_algorithm}
For the generators in the smart grid, their equilibrium strategies are given by $\mathbf{WX} = \mathbf{b}$ in \eqref{system}. To obtain the solution, we adopt Guass-Seidel iterative method \cite{nocedal2006numerical} when the inverse of matrix $\mathbf{W}$ is complex to compute. Specifically, the update scheme is given by
\begin{align}\label{GS}
X_i^{(t+1)} = \frac{1}{W_{ii}}\big[ b_i - \sum_{k=1}^{i-1} W_{ik}X_k^{(t+1)} - \sum_{k=i+1}^{3N_g} W_{ik}X_k^{(t)} \big]
\end{align}
for $i = 1,2,...,3N_g$, where $t$ is the time step index, and $\mathbf{W}=[W_{ik}]_{i,k\in\{1,2,...,3N_g\}}$. Putting \eqref{GS} in a matrix format yields
\begin{align}
\mathbf{X}^{(t+1)} = \mathbf{M}\mathbf{X}^{(t)} + \mathbf{D}^{-1}\mathbf{b},\label{matrixform}
\end{align}
where $\mathbf{M} = \mathbf{D}^{-1}(\mathbf{D}-\mathbf{W})$, and $\mathbf{D}= \begin{bmatrix}
\mathbf{A}_1&\mathbf{0}&\mathbf{0}\\
\mathbf{0}&\mathbf{T}_1^{RU}&\mathbf{0}\\
\mathbf{T}_4-\mathbf{I}&\mathbf{0}&\mathbf{T}_1^{LL}
\end{bmatrix}$. $\mathbf{D}$ includes the lower triangular portion of matrix $\mathbf{W}$. Note that sub-matrices $\mathbf{T}_1^{RU}$ and $\mathbf{T}_1^{LL}$ in $\mathbf{D}$ represent the upper-right and lower-left triangular matrices of $\mathbf{T}_1$, respectively. The invertibility of matrix $\mathbf{D}$ in \eqref{matrixform} is critical. We have the following lemma.

\begin{lemma}\label{lemma2}
Matrix $\mathbf{D}$ is invertible, and thus \eqref{matrixform} is feasible.
\end{lemma}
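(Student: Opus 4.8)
The plan is to prove $\det(\mathbf{D})\neq 0$ by reducing $\mathbf{D}$ to a block-triangular form and then checking that the relevant diagonal blocks are nonsingular. Observe first that in $\mathbf{D}$ the middle block-row has a single nonzero block $\mathbf{T}_1^{RU}$, and the middle block-column likewise contains only $\mathbf{T}_1^{RU}$. Applying one and the same permutation to the block-rows and block-columns so as to move this middle block into the top-left corner turns $\mathbf{D}$ into a block lower-triangular matrix with diagonal blocks $\mathbf{T}_1^{RU}$, $\mathbf{A}_1$, $\mathbf{T}_1^{LL}$; since a simultaneous row/column permutation does not change the determinant, $\det(\mathbf{D})=\det(\mathbf{A}_1)\,\det(\mathbf{T}_1^{RU})\,\det(\mathbf{T}_1^{LL})$. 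Here $\mathbf{A}_1=\mathrm{diag}(a_1,...,a_{N_g})$ with every $a_j>0$, so $\det(\mathbf{A}_1)>0$; and $\mathbf{T}_1^{RU},\ \mathbf{T}_1^{LL}$ are, by construction, triangular matrices whose main diagonals coincide with the diagonal of $\mathbf{T}_1$ (recall $\mathbf{T}_1^{RU}$ is the triangular part of $\mathbf{T}_1^{T}$, which has the same diagonal as $\mathbf{T}_1$). Hence $\det(\mathbf{T}_1^{RU})=\det(\mathbf{T}_1^{LL})=\prod_{j}(\mathbf{T}_1)_{jj}$, and the whole problem reduces to showing $(\mathbf{T}_1)_{jj}\neq 0$ for every $j$.

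This last point is the crux, and I would settle it by identifying $\mathbf{T}_1$ with a Schur complement of the reduced Laplacian. Recalling from Lemma \ref{lemma3} that $\mathbf{T}_1=\mathbf{B}_3\mathbf{B}_1^{-1}\mathbf{B}_2-\mathbf{B}_4$, where $\mathbf{B}_1,\mathbf{B}_2,\mathbf{B}_3,\mathbf{B}_4$ are the blocks of $\mathbf{B}$ under the partition of buses into $\mathcal{N}_d$ and $\mathcal{N}_g$, a short sign computation shows $\mathbf{T}_1=(-\mathbf{B}_4)-(-\mathbf{B}_3)(-\mathbf{B}_1)^{-1}(-\mathbf{B}_2)$, i.e.\ $\mathbf{T}_1$ is precisely the Schur complement of the leading block $(-\mathbf{B}_1)$ in $(-\mathbf{B})$. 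Since $-\mathbf{B}$ is a symmetric reduced Laplacian of the connected grid, it is symmetric positive definite; therefore its principal submatrix $-\mathbf{B}_1$ is positive definite (hence invertible, consistent with Lemma \ref{lemma3}), and, by the standard fact that the Schur complement of a positive-definite block of a symmetric positive-definite matrix is again symmetric positive definite, $\mathbf{T}_1\succ 0$. In particular $(\mathbf{T}_1)_{jj}>0$ for every $j\in\mathcal{N}_g$.

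Putting the pieces together gives $\det(\mathbf{D})=\big(\prod_{j}a_j\big)\big(\prod_{j}(\mathbf{T}_1)_{jj}\big)^2>0$, so $\mathbf{D}$ is invertible and the Gauss--Seidel update \eqref{matrixform} is well defined. I expect the only genuinely delicate step to be the Schur-complement identification: one has to track the three sign flips ($-\mathbf{B}_3$, $(-\mathbf{B}_1)^{-1}$, $-\mathbf{B}_2$) carefully so that they collapse to $+\mathbf{B}_3\mathbf{B}_1^{-1}\mathbf{B}_2$ and the expression matches the definition of $\mathbf{T}_1$. Everything else — the block-triangular determinant factorization and the positivity of the diagonal of an SPD matrix — is routine.
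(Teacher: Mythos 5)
Your proof is correct. The first half is essentially the paper's step in different clothing: the paper simply notes that $\mathbf{D}$ is (entrywise) lower triangular, so its determinant is the product of its diagonal entries --- the $a_j$ together with two copies of the diagonal of $\mathbf{T}_1$ --- and your block permutation arrives at exactly the same product $\det(\mathbf{A}_1)\det(\mathbf{T}_1^{RU})\det(\mathbf{T}_1^{LL})=\bigl(\prod_j a_j\bigr)\bigl(\prod_j (\mathbf{T}_1)_{jj}\bigr)^2$. Where you genuinely diverge is the crux, $(\mathbf{T}_1)_{jj}\neq 0$. The paper argues entrywise from the signs of $\mathbf{B}_1^{-1}$, $\mathbf{B}_2$, $\mathbf{B}_3$ and the diagonal of $\mathbf{B}_4$; with the stated sign conventions this actually yields $\mathbf{T}_1=-\mathbf{B}_4-(\text{entrywise nonnegative matrix})$, a positive diagonal term minus a nonnegative one, which by itself does not pin down the sign of $\operatorname{diag}(\mathbf{T}_1)$. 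You instead identify $\mathbf{T}_1$ as the Schur complement of $-\mathbf{B}_1$ in the symmetric positive definite reduced Laplacian $-\mathbf{B}$ (the same identification made, up to sign, in the proof of Lemma \ref{invertible_W}) and conclude $\mathbf{T}_1\succ 0$, hence a strictly positive diagonal. Your route is the more robust of the two: it closes the sign ambiguity in the entrywise computation, gives $\det(\mathbf{D})>0$ rather than merely $\det(\mathbf{D})\neq 0$, and its only extra ingredient --- positive definiteness of the reduced Laplacian --- is already implicit in the paper's appeal to the matrix-tree theorem for the invertibility of $\mathbf{B}$ (i.e., connectivity of the grid).
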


\begin{proof}
See Appendix \ref{app2}.
\end{proof}

Since \eqref{GS} is a linear difference equation, a necessary and sufficient condition for its convergence to the unique equilibrium point is that the eigenvalues of matrix $\mathbf{M}$ are in the unit circle, that is,
\begin{align}\label{convergence}
\rho (\mathbf{M})<1,
\end{align}
where $\rho(\cdot)$ denotes the spectral radius operator. Note that matrix $\mathbf{M}$ is related to the smart grid topology and the constant parameters in the leader's objective. Therefore, when those grid and objective parameters are appropriate such that matrix $\mathbf{M}$ satisfies \eqref{convergence}, the iterative algorithm \eqref{GS} converges. Under the case that \eqref{convergence} does not hold, then the generators can directly solve $\mathbf{WX} = \mathbf{b}$ through $\mathbf{X}=\mathbf{W}^{-1}\mathbf{b}$, since matrix $\mathbf{W}$ is invertible by Lemma \ref{invertible_W}. 

Besides the public known power system topology, the only required information from the followers during the updates of \eqref{GS} is $\gamma_i$ in $\mathbf{b}$ which is related to the unit renewable generation cost $\psi_i$ and the incentive of voltage regulation $\eta_i$, for $i\in\mathcal{N}_d$. In the following, we present three methods with different \textit{information structures} to enable the leader's updates.

\subsubsection{Knowledge of Private Parameters (KPP)} The first approach is that microgrids inform generators with their parameters $\psi_i$ and $\eta_i$, $\forall i\in\mathcal{N}_d$, in advance. Then, generators can determine their strategy $\mathbf{P}_g^{*}$ by using \eqref{GS} directly without any further interactions with microgrids. The drawback of this method is that generation costs and incentives of voltage regulation are private information of microgrids, and thus may not be available to generators.

\subsubsection{Knowledge of Generation Decisions (KGD)}
To address the privacy issue in KPP, an alternative approach to obtain parameters $\gamma_i$, $\forall i\in\mathcal{N}_d$, is that generators first announce their generations $\mathbf{P}_g$ to microgrids, and then microgrids send the best response renewable energy profile $\mathbf{P}_d$ back to generators. The information about power generations is much less private  than the specific parameters, such as $\psi_i$ and $\eta_i$, $\forall i\in\mathcal{N}_d$. In addition, remind that $\mathbf{q} = \big[ \frac{\gamma_i}{s_{ii}} -\sum_{j\in \mathcal{N}_g}  \frac{s_{ij}}{s_{ii}} P_j\big]_{i\in \mathcal{N}_d}$, then $\gamma_i$ can be obtained from $\mathbf{q} = \mathbf{H}\mathbf{P}_d$.

\subsubsection{Knowledge of Bus Voltage Angles (KBA)} Though KGD preserves high confidentiality for microgrids, leaders still need the feedback information of amount of renewable generations from the followers, and thus requires a \textit{two-way} smart grid communication infrastructure. Remind that the communication channels from generators to microgrids can be used to transmit information including the power generation and market power price. To reduce the communication cost, we propose a fully distributed update scheme for the generators, and therefore simplifies the two-way communication model to a single-way one.

The main focus is to design a distributed mechanism to obtain parameters $\gamma_i$, for $i\in\mathcal{N}_d$. Note that we have $\mathbf{T}_1\boldsymbol{\theta}_g+\mathbf{T}_2\mathbf{q}-\mathbf{P}_g=0$ which can be rewritten as 
\begin{equation}\label{estimation}
\mathbf{T}_2 {\boldsymbol\Upsilon} =  \mathbf{P}_g-\mathbf{T}_1\boldsymbol{\theta}_g + \mathbf{T}_2 \boldsymbol\Lambda,
\end{equation}
where $\boldsymbol\Upsilon:=\big[ \frac{\gamma_i}{s_{ii}} \big]_{i\in \mathcal{N}_d}$, and $\boldsymbol\Lambda:= \big[ \sum_{j\in \mathcal{N}_g}  \frac{s_{ij}}{s_{ii}} P_j \big]_{i\in \mathcal{N}_d}$.
In the Stackelberg game, when generators announce their strategy $\mathbf{P}_g$ to microgrids, they can obtain a voltage angle profile $\boldsymbol{\theta}_g$ by using PMUs after the best response of microgrids.  Therefore, the right hand side of \eqref{estimation} is a complete information to the generators.  However, to obtain vector $\boldsymbol\Upsilon$,
another challenge is that the number of generators is generally less than the number of microgrids, i.e., $N_g<N_d$. Thus, $\mathbf{T}_2$ is an underdetermined matrix, and solving \eqref{estimation} gives an infinite number of solutions if there are any.

To address this problem, one possible way is to seek an estimation of $\hat{\boldsymbol\Upsilon}$ based on \eqref{estimation}. However, to remove the inconsistency between the measured voltage angle $\boldsymbol{\theta}_g$ and the anticipated one due to the parameter estimation error is an obstacle and hence makes this method infeasible. Realizing that separate parameters $\gamma_i$, $\forall i\in\mathcal{N}_d$, are challenging to obtain, we turn to investigate the form of $T_5(j),$ for ${j\in\mathcal{N}_g}$. Remind that for $i\in\mathcal{N}_g$,
$
T_5(i)=-\sum_{q\in\mathcal{N}_d}T_2(i,q)\frac{\gamma_q}{s_{qq}}.
$
Then, for convenience, we define 
\begin{equation}\label{unknownT5}
\tilde T_{5}(i):=\sum_{q\in\mathcal{N}_d}T_2(i,q)\frac{\gamma_q}{s_{qq}},\ i\in\mathcal{N}_g,
\end{equation}
 which is a weighted aggregation of $\gamma_q,\ \forall q\in\mathcal{N}_d$. Denote $\tilde{\mathbf{T}}_5=[\tilde T_{5}(i)]_{i\in\mathcal{N}_g}$, and notice that vector $\mathbf{T}_2 {\boldsymbol\Upsilon}$ on the left hand side of \eqref{estimation} is equivalent to $\tilde{\mathbf{T}}_5$, i.e., $\mathbf{T}_2 {\boldsymbol\Upsilon} = \tilde{\mathbf{T}}_5$. Since the right hand side of \eqref{estimation} is known to the generators, then, the unknown information $\tilde{\mathbf{T}}_5$ can be obtained subsequently. 

Based on the chosen strategy $\mathbf{P}_g$ and the corresponding best response ${\boldsymbol\theta}_g$ measured by PMUs, generators can calculate $\mathbf{P}_g-\mathbf{T}_1\boldsymbol{\theta}_g + \mathbf{T}_2 \boldsymbol\Lambda$ and assign it to $\tilde{\mathbf{T}}_5$. Then, vector $\mathbf{b}$ in \eqref{matrixform} is determined, and thus generators can update their strategies through algorithm \eqref{GS} in a fully distributed fashion. The communications from microgrids to generators are not required which make KBA more advantageous than KPP and KGD. The single-way information flow not only reduces the communication construction costs, but also improves the quality of service due to the enhancement of signal-to-noise-ratio of channels. Particularly, in cases where the microgrids use PDA, and also they  know the market power price from the Internet which is an adopted mechanism currently, then the smart grid communication infrastructures between two layers and among microgrids are not needed.

\subsection{Combined Distributed Update Algorithm}
We have proposed three update schemes for microgrids and generators, respectively. Different combinations of these methods have various features. We present some key characteristics including communication cost, privacy level and algorithm update efficiency of each combination in Table \ref{t1}. Specifically, the communication cost for each scheme is measured through the number of communication links between the players in the grid. By using KBA, the generators do not need to know microgrid parameters $\psi_i$ and $\eta_i$ (KPP), $\forall i\in\mathcal{N}_d$, and the generation decisions of microgrids (KGD), and hence reduces the number of communication links from the microgrids to the generators. From the microgrids' side, PDA does not require communication links from the generators to the microgrids and the ones between microgrids themselves as IUA and RUA do. Therefore, the implementations of strategies combing IUA/RUA with KPP/KGD require the largest communications investment in terms of the total number of links. By using IUA\&{KPP} as the benchmark, we can measure the communication costs of other strategies. For example, using KBA\&{PDA} strategy saves the communication infrastructures between generator and microgrid layers as well as
those within microgrids. Therefore, the communication cost is ultra low. In contrast, PDA\&KPP yields a low communication cost by reducing the number of required communications between microgrids.

The privacy level of each strategy is determined by the type of required information for updates of each agent. The local cost parameters of $\psi_i$ and $\eta_i$, $\forall i\in\mathcal{N}_d$, are direct private and sensitive information of a microgrid. In addition, individual decisions of the other players are viewed as indirect private information. However, PMU measurements on the buses are viewed as the common information shared between the generator and the microgrids. Therefore, the scheme, PDA\&{KBA}, adopts PMUs that measure the bus angles to enable the decision updates of players preserves the highest level of privacy. Note that at the microgrids’ side, when a microgrid uses IUA or RUA strategy, it needs to know the decisions of the other players in the grid as well. Using the
combined strategy PDA\&{KBA} as the benchmark, we can determine the privacy level of other strategies including IUA/RUA and KPP/KGD. For example, the
combined update scheme with PDA\&{KGD} has to disclose generation decisions, and hence its privacy level can be deemed as high instead of ultra high. The worst cases in terms of privacy are IUA/RUA and KPP since they disclose private parameters $\psi_i$ and $\eta_i$, $\forall i\in\mathcal{N}_d$, and the generation decisions at the same time. We can further rank the privacy level of other strategies in a similar fashion.  By comparison, the scheme combing PDA and KBA is preferred, since its communication cost is ultra low, and it preserves high privacy for microgrids. The update efficiency is measured via the time required by the system to reach an SE point which is mainly affected by the update fashion of microgrids. Generally, the update efficiency of IUA outperforms those of RUA and PDA. However, when update probability $\tau_i$ of microgrid $i$ is large, then the efficiency of IUA, RUA and PDA are without significant difference. 

\begin{table}[t]
\centering
\renewcommand\arraystretch{1.3}
\caption{Characteristics of the Combined Update Schemes\label{t1}}
\begin{tabular}{|c|c|c|c|} \hline
\diagbox[width=8em]
  {Strategy}{Metric}&\thead{Communication\\ Cost}& \thead{Privacy\\ Level}&\thead{Update\\Efficiency} \\ \hline
{IUA \& KPP} & High & Low & Ultra High \\ \hline
{IUA \& KGD} & High & Medium & High \\ \hline
{IUA \& KBA} & Medium & Medium & High \\ \hline
{RUA \& KPP} & High & Low & Medium \\ \hline
{RUA \& KGD} & High & Medium & Medium \\ \hline
{RUA \& KBA} & Medium & Medium & Medium \\ \hline
{PDA \& KPP} & Low & Medium & Medium \\ \hline
{PDA \& KGD} & Low & High & Medium \\ \hline
{PDA \& KBA} & Ultra Low & Ultra High & Medium \\ \hline
\end{tabular}
\end{table}

By combining the PDA for microgrids and the Guass-Seidel iterative method incorporating with KBA for generators, we arrive at a fully distributed update scheme to compute the Stackelberg equilibrium solution in the two-level smart grid. For clarity, the scheme is summarized in Algorithm 1.

\textit{\textbf{Remark}}: The convergence of Algorithm \ref{algorithm3} is guaranteed when Theorem \ref{thm3} for microgrids and condition \eqref{convergence} for generators are  both satisfied.

\begin{algorithm}[t]
\caption{Distributed Scheme to Search for SE Strategy}\label{algorithm3}
\begin{algorithmic}[1]
\State Initialize $P_i^l,\ P_i^{\max},\ \tau_i,\ \forall i\in\mathcal{N}_d$, $P_{j,\max}^g,\ \forall j\in\mathcal{N}_g$, error tolerance $\epsilon_1,\ \epsilon_2$
\State For $n=1$ and $t=1$, arbitrarily choose feasible $P_i^{(n)}$ for $i \in \mathcal{N}_d$, and $P_i^{(t)}$ for $i \in \mathcal{N}_g$
\State \textbf{Repeat} for $n$ if needed \label{innerloop}
\State \textbf{for} $i=1,2,...,N_d$
\State Microgrid $i$ measures its bus voltage angle $\theta_i^{(n)}$ by PMU
\State Random update $P_i^{(n+1)}$ through distributed scheme \eqref{pmu_update}
\State \textbf{end for} 
\State \textbf{if} $\lVert \mathbf{P}_d^{g(n+1)} - \mathbf{P}_d^{g(n)} \rVert_\infty > \epsilon_1$
\State \quad $n=n+1$, and go to step \ref{innerloop}
\State \textbf{else}
\State \quad Go to step \ref{outterloop}
\State \textbf{end}
\State If $t=1$, generators measure $\boldsymbol{\theta}_g^{(1)}$ by PMUs, then calculate $\mathbf{P}_g^{(1)}-\mathbf{T}_1\boldsymbol{\theta}_g^{(1)} + \mathbf{T}_2 \boldsymbol\Lambda^{(1)}$ and assign it to $\tilde{\mathbf{T}}_5$. Obtain unkown parameters via \eqref{unknownT5}. Otherwise, go to step \ref{out}\label{outterloop}
\State \textbf{Repeat} for $t$ if needed 
\State \textbf{for} $j=1,2,...,3N_g$
\State Update $X_j^{(t+1)}$ through \eqref{GS}
\State \textbf{end for} 
\State \textbf{if} $\lVert \mathbf{P}_g^{g(t+1)} - \mathbf{P}_g^{g(t)} \rVert_\infty > \epsilon_2$
\State \quad $t = t+1$, and go to step \ref{outterloop}
\State \textbf{else}
\State \quad $\mathbf{P}_g^{g*} = \mathbf{P}_g^{g(t+1)}$, and go to step \ref{innerloop}
\State \textbf{end}
\State $\mathbf{P}_d^{g*}=\mathbf{P}_d^{g(n+1)}$ \label{out}
\State \textbf{return} $\mathbf{P}_d^{g*}$ and $\mathbf{P}_g^{g*}$
\end{algorithmic}
\end{algorithm}

\begin{figure*}[t]
  \centering
  
{%
    \includegraphics[width=0.85\textwidth]{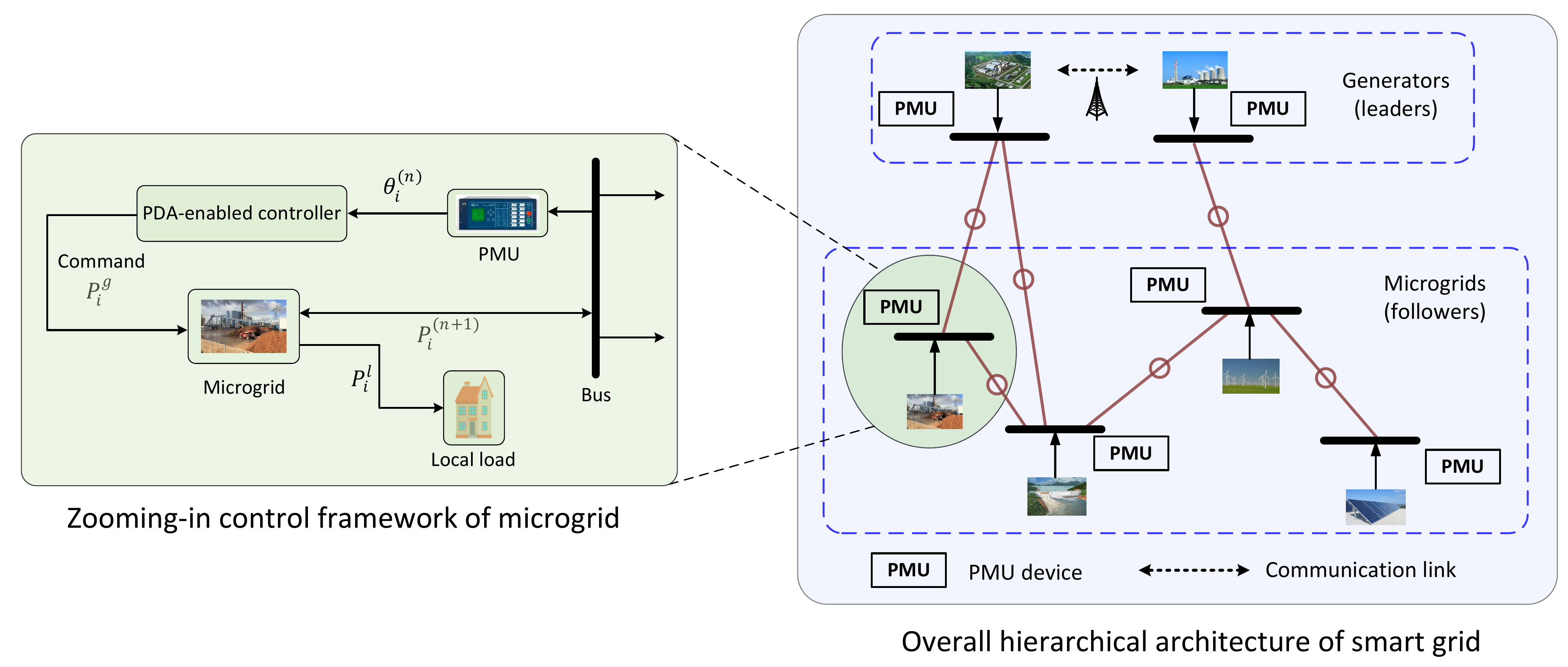}%
    \label{gene_down}%
  }%
  \caption{Illustration of the overall implementation architecture of Algorithm \ref{algorithm3}. Generators make decisions collaboratively via internal communications. PMUs are installed at buses which enable the updates.  In addition, the local control framework of microgrids based on the measured bus voltage angle is designed.}
  \label{framework_distributed}
\end{figure*}

\section{Implementation Architectures of Algorithms in Smart Grid}\label{architecture}
In this section, we develop implementation architectures for the proposed algorithms in smart grid.

\subsection{Communication-Based Implementation Architecture}
Different schemes developed in Section \ref{algorithms} require various implementation frameworks. Specifically, for microgrid's update, IUA and RUA need the support of communication channels both from generators to microgrids and among microgrids themselves. For generator's update, communications from microgrids to generators are necessary for KPP and KGD enabled algorithms. Communication networks in the smart grid can be realized by a number of technologies, such as Internet protocol, power line communication (PLC), WiMAX, ZigBee, optical-fiber and cellular networks \cite{hossain2012smart}. Though PLC is ubiquitous in the power systems, and it has low installation cost, there exists high noise over power lines which causes distortion of signals. The application of wireless technologies can lead to packet losses due to signal interferences and hence impact the efficiency of update algorithms. If the transmitted data between two layers in the framework is modified due to cyber attacks, then the equilibrium outcome of the generation game $G$ would be different. Since the distance between players can be large, the Internet-based networks are more suitable for our framework in terms of signal attenuation and capacity.  The communication-based implementation architecture is shown in Fig. \ref{layer1} in which both generators and microgrids rely on the communication infrastructures for decision updates.

\subsection{PMU-Based Implementation Architecture}
When buses in the smart grid are equipped with PMUs, the convenient PMU-enabled and KBA-enabled distributed algorithms can be deployed by microgrids and generators for updates, respectively. The combined algorithm reduces communication infrastructures among microgrids and between two layers as shown in Fig. \ref{layer1}. In Algorithm \ref{algorithm3}, generators make use of the measured voltage angles by PMUs only once to determine the unknown parameters. Thus, generators' actions are more related to the offline iterative scheme \eqref{GS} which does not include the real-time measurements of voltage angle at buses. Different from the leader, the decision-making of each follower depends on its bus voltage angle at every update step. Therefore, the local control framework of microgrid needs to incorporate the features of smart grid operation.

The overall implementation architecture of Algorithm \ref{algorithm3} is depicted in Fig. \ref{framework_distributed}. Specifically, PMUs are installed at buses to measure the voltage angle. Compared with the communication-based architecture in Fig. \ref{layer1}, the hierarchical framework in Fig. \ref{framework_distributed} does not include communications both between the two layers and within the lower microgrids network. 
The detailed control framework to implement the PMU-enabled algorithm of microgrids is also shown in Fig. \ref{framework_distributed}. Specifically, the PMU measures the voltage angle $\theta_i^{(n)}$ at bus $i$ at step $n$, and sends it to the PDA-enabled local controller. Then, the controller generates a signal that informs the microgrid to dispatch an appropriate amount of renewable energy to the external grid and local load, respectively.
 The negative $P_i^{(n+1)}$ indicates that microgrid $i$ buys power from the external grid.


\section{Case Studies}\label{cases}
We validate our proposed algorithms via case studies based on a 6-bus system shown in Fig. \ref{6bus}. Specifically, buses 1, 2 and 3 are connected to microgrids that generate wind, solar, geothermal renewable energies, respectively, and form a follower network. Buses 4 and 6 are connected to generators which are leaders in the power generation game. Denote the generators at buses 4 and 6 as generator 1 and generator 2, respectively. In addition, bus 5 is selected as the slack bus. Details of the power system model can be found in \cite{wood2012power}.

 Without loss of generality, we set $\eta_i=\eta=10^3$, $P_{i,\max}^g=100\mathrm{MW},\ \forall i \in\mathcal{N}_d$, and $P_{j,\max}^g=800\mathrm{MW},\ \forall j \in\mathcal{N}_g$. In addition, some parameters of generators are as follows: $\alpha_4=50\times 10^4$, $\alpha_6=30\times 10^4$, $a_4=0.05\$/\rm{MW}^2$, $a_6=0.08\$/\rm{MW}^2$ $b_4=6\$/\rm{MW}$, $b_6=8\$/\rm{MW}$, $c_4=130\$$ and $c_6=120\$$. The selection of parameters related to the maximum generations and the generation costs can refer to Chapter 3 of \cite{wood2012power}.  The error tolerance constants are equal to $\epsilon_1 = \epsilon_2=10^{-3}$. The market electricity price and the unit renewable generation cost of microgrids are equal to $\zeta=140\$/\rm{MWh}$, $\psi_1 = 110\$/\rm{MWh}$, $\psi_2 = 150\$/\rm{MWh}$ and $\psi_3 = 80\$/\rm{MWh}$, respectively\cite{taylor2015renewable}. The local loads at microgrids are given by $P_1^l=220\rm{MW}$, $P_2^l=350\rm{MW}$ and $P_3^l=170\rm{MW}$, respectively. The  subscript number of above parameters corresponds to the bus index in the smart grid.

In the case studies, the update probabilities of three microgrids are chosen as follows: $\tau_1=0.7$, $\tau_2=0.7$ and $\tau_3=0.75$. Thus, $\bar \tau=0.75$ and $\underline{\tau}=0.7$. Note that these probabilities correspond to the frequencies of microgrids' decision-makings. Specifically, larger value of update probability indicates more frequent decision updates. The time scale of microgrid's each update is critical. Specifically, the iterative updates of microgrids are faster than their load dynamics, and hence the loads can be seen as fixed in the game. Before finding the equilibrium solution via the distributed algorithm, we first verify the convergence of the update scheme by checking the sufficient conditions in Theorem \ref{thm3} and \eqref{convergence}. The components in matrix $\mathbf{S}$ that correspond to microgrids constitute a submatrix $\mathbf{S}'$, and based on $\mathbf{S}'$ we obtain $\max_{i,j\neq i\in \mathcal{N}_d} \frac{s_{ij}}{s_{ii}}=0.4246$.
 Then we can calculate $\bar{\tau}\cdot \max_{i,j\neq i\in \mathcal{N}_d} \frac{s_{ij}}{s_{ii}}(N_d-1)=0.637<\underline{\tau}=0.7$. Therefore, the sufficient condition in Theorem \ref{thm3} is satisfied. In addition, the constructed matrix $\mathbf{W}$ is invertible which verifies Lemma \ref{invertible_W}, and we have $\rho (\mathbf{M})= 0.885$ which also meets the condition \eqref{convergence}. Hence, the distributed Algorithm \ref{algorithm3} can converge to the unique equilibrium point in this case study.

  \begin{figure}[t]
\centering
\includegraphics[width=0.75\columnwidth]{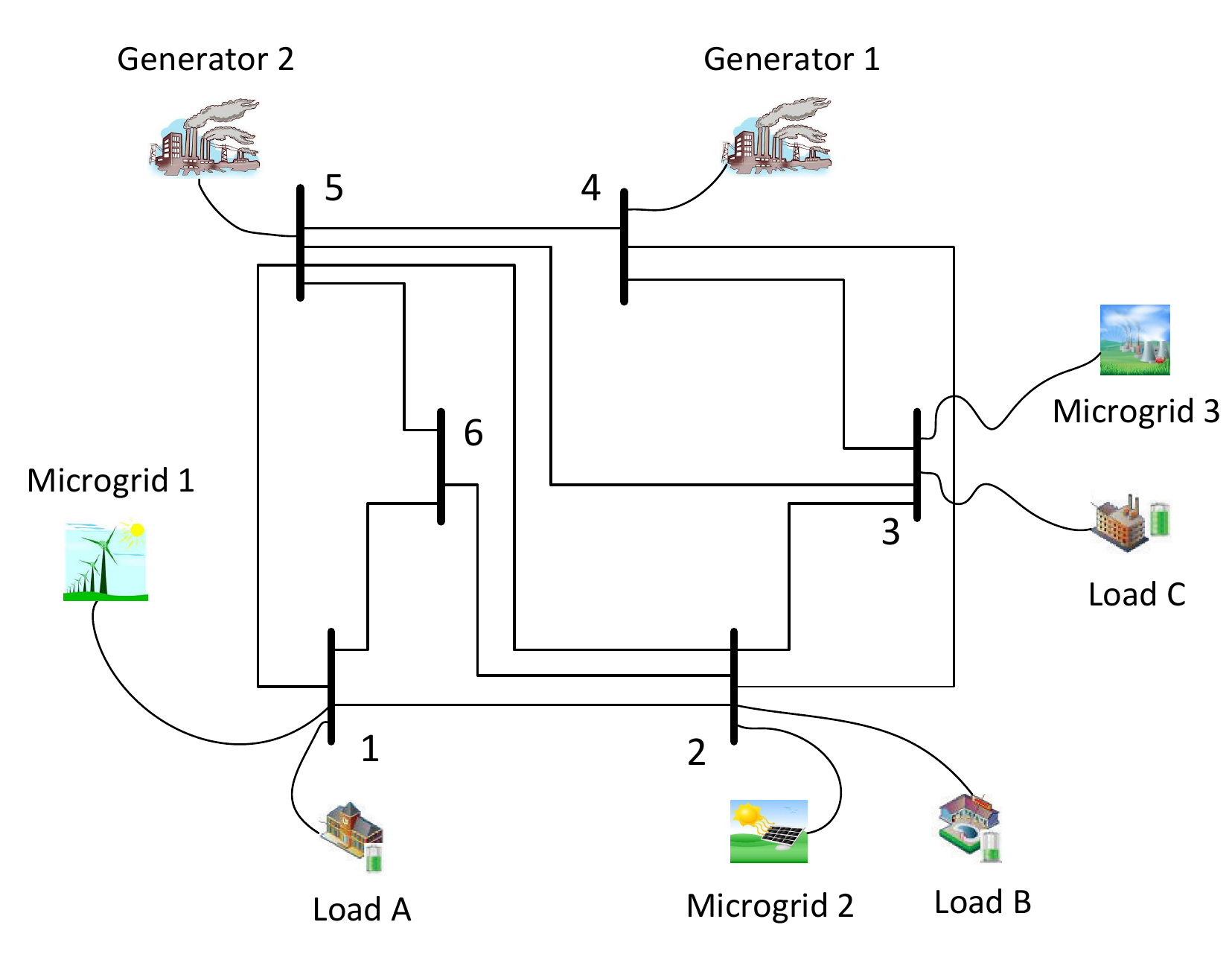}
\caption{6-bus power system model. Buses 1, 2 and 3 are connected to microgrids that generate renewable energies, and they constitute the lower layer. Generators connected to buses 4 and 6 are leaders, and they form the upper layer in the smart grid.}\label{6bus}
\end{figure}

  \begin{figure}[t]
\centering
\includegraphics[width=0.65\columnwidth]{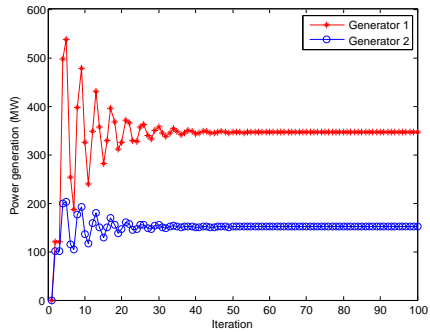}
\caption{The strategy of generators during updates.}\label{gen}
\end{figure}

\begin{figure}[t]
  \centering
  \subfigure[]{
    \includegraphics[width=0.65\columnwidth]{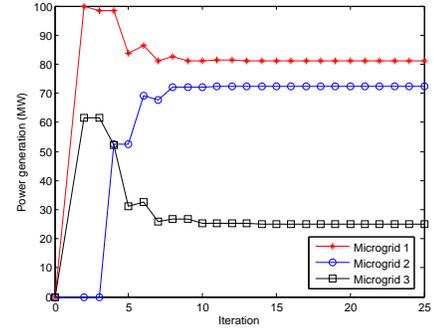}}
	 \subfigure[]{
    \includegraphics[width=0.65\columnwidth]{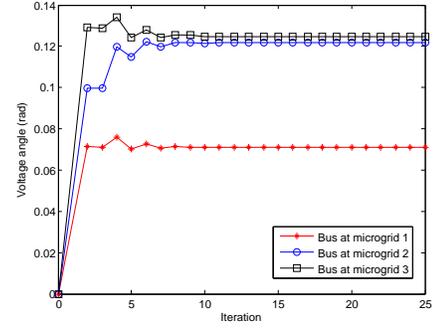}}
  \caption[]{(a) and (b) show the results of the renewable generations and bus voltage angles of microgrids by using the PMU-enabled distributed algorithm, respectively.}
  \label{microgrid}
\end{figure}

\begin{figure}[t]
  \centering
  \subfigure[]{
    \includegraphics[width=0.65\columnwidth]{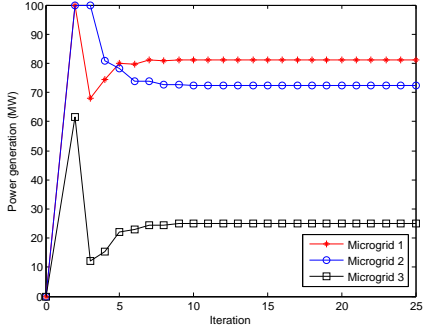}}
	 \subfigure[]{
    \includegraphics[width=0.65\columnwidth]{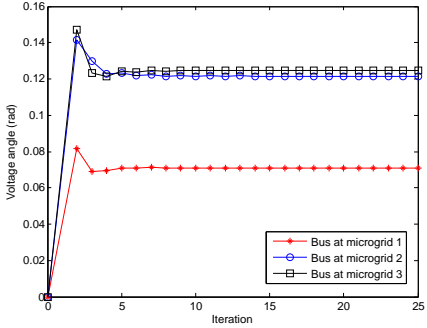}}
  \caption[]{(a) and (b) show the results of the renewable generations and bus voltage angles of microgrids by using the iterative update algorithm, respectively.}
  \label{microgrid_iterative}
\end{figure}

Next, we aim to obtain the Stackelberg equilibrium solution of players in the smart grid by using Algorithm \ref{algorithm3}. The generators first choose an arbitrary feasible decision profile $\mathbf{P}_g^{g(1)}$, and microgrids will respond to it in a fully distributed manner. After the followers reaching an equilibrium, generators measure the voltage angles at their buses and obtain vector $\tilde{\mathbf{T}}_5=[610.5,186.3]^T$ which completes the unknown information extraction through KBA for the leader network. Then, the generators can determine their best strategies via offline updates \eqref{GS}, and the results are shown in Fig. \ref{gen}. Note that the time scale of generators' offline decision updates is negligible comparing with that of microgrids' updates, and the generators implement the strategy at equilibrium in Fig. \ref{gen}. We can see that the generators' equilibrium strategy is given by $\mathbf{P}_g^{g*} = [346.3,151.2]^T$MW. 

The change of power generation of leaders will influence the decision-making of followers. After generators making the decision $\mathbf{P}_g^{g*}$, the best response strategies of microgrids including the renewable generations and voltage angles are shown in Fig. \ref{microgrid}. Note that the voltage angles are small, indicating that the DC approximation performs well \cite{zhu2012game,glover2011power,stott2009dc}. The microgrids can reach the equilibrium point in 14 time steps by using the PMU-enabled distributed algorithm, and the equilibrium generation is $\mathbf{P}_d^{g*} = [81.2,72.3,25.1]^T$MW. Note that the time scale of each time step in the algorithm can be 20 minutes when the microgrids are active in updating their strategies in this generation game. Then, the PMU-enabled distributed algorithm reaches an equilibrium within 5 hours in this case study. For comparison, the decision updates of microgrids enabled by the IUA are shown in Fig. \ref{microgrid_iterative}. The IUA converges to the equilibrium after 9 steps which is faster than the distributed PDA. The reason is that all microgrids update synchronously by using the centralized IUA. In addition, the equilibrium strategies yielded by PDA and IUA in Figs. \ref{microgrid} and \ref{microgrid_iterative} are identical. Therefore, the proposed distributed algorithms for generators and microgrids are effective in finding the Stackelberg equilibrium solution in the two-layer interdependent generation game.

\section{Conclusion}\label{conclusion}
We have established a Stackelberg game framework to capture the  power generation decisions of generators and microgrids which are seen as leaders and followers, respectively, in a smart grid. We have proposed fully distributed schemes for both generators and microgrids to search for the equilibrium strategy. The main required knowledge for decision updates is only the voltage angles at buses which can be obtained by PMU. Case studies have corroborated our theoretical findings and designed algorithms. The future research would be considering the cyber security issues in the smart grids and implementing the designed algorithm experimentally.

\appendices

\section{Proof of Lemma \ref{H_invertible}}\label{app_H_invertible}
\begin{proof}
When the equilibrium solution $\mathbf{P}_d^{*}$ is an inner point, to show its uniqueness, one way is to show that matrix $\mathbf{H}$ is invertible. Since $\mathbf{S}$ is of full rank, and base on the Sylvester's criterion, the upper left $N_d$-dimensional square matrix $\mathbf{S}_1$ in $\mathbf{S}$ is also invertible. Note that determinant $|\mathbf{S}_1|\neq 0$ and it satisfies $|\mathbf{S}_1|=|\mathbf{H}|\cdot \prod_{i\in\mathcal{N}_d} s_{ii}$. Because $s_{ii}>0,\ \forall i\in\mathcal{N}_d$, $|\mathbf{H}|\neq 0$ and thus $\mathbf{H}$ is invertible.
\end{proof}

\section{Proof of Lemma \ref{lemma3}}\label{app_lemma3}
\begin{proof}
First, we rewrite the constraint $\mathrm{\mathbf{P}}=-\mathrm{\mathbf{B}}\boldsymbol{\theta}$ as
\begin{align*}
\begin{bmatrix}
\mathbf{P}_d\\[6pt]
\mathbf{P}_g
\end{bmatrix}=
-\begin{bmatrix}
\mathbf{B}_1& \mathbf{B}_2\\[6pt]
\mathbf{B}_3& \mathbf{B}_4
\end{bmatrix}
\begin{bmatrix}
\boldsymbol{\theta}_d\\[6pt]
\boldsymbol{\theta}_g
\end{bmatrix},
\end{align*}
where $\mathrm{\mathbf{B}}:=\begin{bmatrix}
\mathbf{B}_1& \mathbf{B}_2\\
\mathbf{B}_3& \mathbf{B}_4
\end{bmatrix}$, $\mathbf{B}_1\in \mathbb{R}^{N_d\times N_d}$, $\mathbf{B}_2\in \mathbb{R}^{N_d\times N_g}$, $\mathbf{B}_3\in \mathbb{R}^{N_g\times N_d}$, $\mathbf{B}_4\in \mathbb{R}^{N_g\times N_g}$, and $\boldsymbol{\theta}:=[
\boldsymbol{\theta}_d^T,
\boldsymbol{\theta}_g^T]^T
$.
Then, based on the constraints $\mathbf{P}=-\mathbf{B}\boldsymbol{\theta}$ and $\mathbf{P}_d=\mathbf{H}^{-1}\mathbf{q}$, we have 
\begin{align}
\mathbf{P}_g=-\mathbf{B}_3\boldsymbol{\theta}_d-\mathbf{B}_4\boldsymbol{\theta}_g,\label{couple1}\\
\mathbf{H}^{-1}\mathbf{q}=-\mathbf{B}_1\boldsymbol{\theta}_d-\mathbf{B}_2\boldsymbol{\theta}_g.\label{couple2}
\end{align}
From \eqref{couple2}, $\boldsymbol{\theta}_d$ can be expressed as
\begin{equation}\label{thetad}
\boldsymbol{\theta}_d=-\mathbf{B}_1^{-1}(\mathbf{H}^{-1}\mathbf{q}+\mathbf{B}_2\boldsymbol{\theta}_g),
\end{equation}
since $\mathbf{B}_1$ is invertible. Plugging \eqref{thetad} into \eqref{couple1} yields
$\mathbf{P}_g=\mathbf{B}_3\mathbf{B}_1^{-1}(\mathbf{H}^{-1}\mathbf{q}+\mathbf{B}_2\boldsymbol{\theta}_g)-\mathbf{B}_4\boldsymbol{\theta}_g,$
which is equivalent to
$
(\mathbf{B}_3\mathbf{B}_1^{-1}\mathbf{B}_2-\mathbf{B}_4)\boldsymbol{\theta}_g+\mathbf{B}_3\mathbf{B}_1^{-1}\mathbf{H}^{-1}\mathbf{q}=\mathbf{P}_g
$.
\end{proof}

\section{Proof of Lemma \ref{invertible_W}}\label{app_invertible_W}
\begin{proof}
To prove that matrix $\mathbf{W}$ is invertible, one way is to show that all columns in $\mathbf{W}$  are linearly independent and hence $\mathbf{W}$ is full-rank. Then, we aim to show that the columns in $\mathtt{C}_1:=[\mathbf{A}_1,\mathbf{0},\mathbf{T}_4-\mathbf{I}]^T$, $\mathtt{C}_2:=[\mathbf{T}_3-\mathbf{I},\mathbf{T}_1^T,\mathbf{0}]^T$ and $\mathtt{C}_3:=[\mathbf{0},\mathbf{A}_2,\mathbf{T}]^T$ are linearly independent within each matrix and across all others. We can check that $[\mathtt{C}_1,\mathtt{C}_3]$ is of full column rank, since $\mathbf{A}_1$ and $\mathbf{A}_2$ are diagonal matrices with positive entries. Similarly, the next step is to show that $[\mathtt{C}_1,\mathtt{C}_2]$ and $[\mathtt{C}_2,\mathtt{C}_3]$ are also of full column rank. By close investigation, one sufficient condition that yields full column rank of $[\mathtt{C}_1,\mathtt{C}_2]$ and $[\mathtt{C}_2,\mathtt{C}_3]$ is that matrix $\mathbf{T}_1$ is of full rank. Then, our objective is to show that  $\mathbf{T}_1=\mathbf{B}_3\mathbf{B}_1^{-1}\mathbf{B}_2-\mathbf{B}_4$ is invertible. This immediately follows from the facts that $-\mathbf{T}_1$ is the Schur complement of $\mathbf{B}_1$, and $-\mathbf{T}_1$ is invertible if and only if matrix $\mathbf{B}$ is nonsingular \cite{bernstein2009matrix}. Therefore, $\mathbf{W}$ is invertible.
\end{proof}

\section{Proof of Lemma \ref{lemma2}}\label{app2}
\begin{proof}
Note that $\mathbf{D}$ is a lower triangular matrix. To proof $\mathbf{D}$ is invertible, one way is to show that all its diagonal elements are nonzero. First, we know that $\mathbf{A}_1$ is a diagonal matrix with all diagonal entries greater than zero. Then, the remaining part is to show that diagonal elements of $\mathbf{T}_1$ is nonzero. Remind that $\mathbf{T}_1=\mathbf{B}_3\mathbf{B}_1^{-1}\mathbf{B}_2-\mathbf{B}_4$. For convenience, denote $\mathbf{B}_1^{-1}=[s_{ij}^1]_{i,j\in\mathcal{N}_d}$. Then, based on Lemma 1, we have $s_{ij}^1\leq 0,\ \forall i,j\in\mathcal{N}_d$. Since $\mathbf{B}$ is symmetric, then $\mathbf{B}_3^T=\mathbf{B}_2$. In addition, the entries of $\mathbf{B}_2$ and $\mathbf{B}_3$ are non-negative from definition. After algebraic calculations, we obtain that matrix $\mathbf{B}_3\mathbf{B}_1^{-1}\mathbf{B}_2\in\mathbb{R}^{N_g\times N_g}$ has non-negative diagonal elements. Since the diagonal entries of $\mathbf{B}_4$ are smaller than zero, then, $\mathbf{T}_1$ has all positive diagonal entries, and thus $\mathbf{D}$ is invertible.
\end{proof}

\bibliographystyle{IEEEtran}
\bibliography{IEEEabrv,microgrids_stakelberg_revised}

\end{document}